\documentclass[twocolumn]{autart}    
\makeatletter
\newif\if@restonecol
\makeatother

\usepackage[linesnumbered,ruled,lined,commentsnumbered,norelsize]{algorithm2e}
\SetAlFnt{\scriptsize}
\SetAlCapNameFnt{\scriptsize}
\usepackage{algpseudocode}
\usepackage{graphicx}          

\usepackage[]{todonotes}
\usepackage[T1]{fontenc}
\let\theoremstyle\relax
\usepackage{amsfonts}
\usepackage{amsthm, amscd}
\usepackage{amsmath}
\usepackage{harvard} 
\usepackage{amssymb}
\usepackage{picins}
\usepackage[nodisplayskipstretch]{setspace}
\usepackage{epstopdf}
\usepackage{url}
\usepackage{parskip}
\usepackage{lmodern}
\usepackage{xcolor}
\usepackage{ifthen}
\usepackage{float}
\usepackage[tight,footnotesize]{subfigure}
\usepackage[caption=false,font=footnotesize]{subfig}
\usepackage{booktabs}  %
\usepackage{diagbox}
\usepackage{caption}
\usepackage{chngcntr}
\let\emptyset\varnothing
\usepackage{enumitem}
\usepackage{mathtools}
\usepackage{nccmath}
\usepackage{etoolbox}

\renewenvironment{thebibliography}[1]{
  \begin{oldthebibliography}{#1}
    \setlength{\itemsep}{2pt}
    \setlength{\parskip}{0pt plus 0.2ex}
}
{
  \end{oldthebibliography}
}

\newtheorem{theorem}{Theorem}
\newtheorem{assumption}{Assumption}
\newtheorem{lemma}{Lemma}

\newtheorem{remark}{Remark}
\newtheorem{definition}{Definition}

\def\cA{{\mathcal A}}

\def\cF{{\mathcal F}}
\def\cC{{\mathcal C}}
\def\cE{{\mathcal E}}
\def\cU{{\mathcal U}}
\def\cR{{\mathcal R}}
\def\cI{{\mathcal I}}

\def\cN{{\mathcal N}}
\def\cL{{\mathcal L}}

\newcommand{\pump}[2]{
	\ifthenelse{\equal{#2}{}}{g}{
		\ifthenelse{\equal{#1}{0}}{#2}{
			\ifthenelse{\equal{#1}{1}}{g(#2)}{
				g^{#1}(#2)
			}
		}
	}
}
\newcommand{\pumpdir}[2]{
	\ifthenelse{\equal{#1}{0}}{#2}{
		\ifthenelse{\equal{#1}{1}}{K #2 + 1}{
			\ifthenelse{\equal{#1}{2}}{K^2 #2 + K + 1}{
				\ifthenelse{\equal{#1}{3}}{K^3 #2 + K^2 + K + 1}{
					K^{#1} #2 + \sum_{i=0}^{i<#1} K^i
				}
			}
		}
	}
}
\newcommand{\pumpinv}[2]{
	\ifthenelse{\equal{#1}{0}}{#2}{
		\ifthenelse{\equal{#1}{1}}{\left\lceil \frac{#2 - 1}{K} \right\rceil}{
			\ifthenelse{\equal{#1}{2}}{\left\lceil \frac{#2 - K - 1}{K^2} \right\rceil}{
				\ifthenelse{\equal{#1}{3}}{\left\lceil \frac{#2 - K^2 - K- 1}{K^3} \right\rceil}{
					\left\lceil \frac{#2 - \sum_{i=0}^{i<#1} K^i}{K^{#1}} \right\rceil
				}
			}
		}
	}
}

\begin{document}
\newcommand{\bff}{{\boldsymbol{f}}}
\newcommand{\bg}{{\boldsymbol{g}}}
\newcommand{\bI}{{\boldsymbol{I}}}
\newcommand{\bu}{{\boldsymbol{u}}}
\newcommand{\bD}{{\boldsymbol{D}}}
\newcommand{\bv}{{\boldsymbol{v}}}
\newcommand{\bS}{{\boldsymbol{S}}}
\newcommand{\bCO}{{\boldsymbol{CO}}}
\newcommand{\bc}{{\boldsymbol{c}}}	
\newcommand{\bF}{{\boldsymbol{F}}}	
\newcommand{\bm}{{\boldsymbol{m}}}	
\newcommand{\RomanNumeralCaps}[1]
{\MakeUppercase{\romannumeral #1}}
\newcommand{\rom}[1]{\uppercase\expandafter{\romannumeral #1\relax}}
\def\QED{~\rule[-1pt]{5pt}{5pt}\par\medskip}
\def\eps{\epsilon}
\def\xh{\hat{x}}
\def\yh{\hat{y}}
\def\xd{\dot{x}}
\def\yd{\dot{y}}
\def\Vd{\dot{V}}
\def\Vd{\dot{V}}
\def\ed{\dot{e}}
\def\xu{{\bf x}}
\def\xhu{{\hat{\bf x}}}
\def\qh{\hat{q}}
\def\xt{\tilde{x}}
\def\yb{\bar{y}}
\def\cA{{\cal A}}
\def\cL{{\cal L}}
\def\xb{\bar{x}}
\def\cR{{\cal R}}
\def\cC{{\cal C}}
\def\cS{{\mathcal S}}
\def\cE{{\cal E}}
\def\cK{{\mathcal K}}
\def\cU{{\cal U}}
\def\cI{{\cal I}}
\def\cP{{\mathcal P}}
\def\cQ{{\cal Q}}
\def\cW{{\cal W}}
\def\cU{{\cal U}}
\def\cCh{\hat{{\cal C}}}
\def\bR{\mathbb{R}}
\def\bZ{\mathbb{Z}}
\def\cC{{\cal C}}
\def\cS{{\mathcal S}}
\def\cD{{\mathcal D}}
\def\cA{{\cal A}}
\def\cU{{\cal U}}
\begin{frontmatter}

\title{Fully distributed singularity-free prescribed-time stabilization of the continuous-time generalized adaptive Bellman-Ford algorithm}

\author[First]{Yuanqiu Mo}\ead{yuanqiumo@seu.edu.cn},
\author[Second]{Jian Qin}\ead{15240231458@163.com},
\author[Third]{Soura Dasgupta}\ead{soura-dasgupta@uiowa.edu}

\address[First]{Department of System Science, School of Mathematics, Southeast University, Nanjing 211189, China}
\address[Second]{School of Mathematics, Sichuan University, Chengdu 610065, China}
\address[Third]{Department of Electrical and Computer Engineering, University of Iowa, Iowa City (IA) 52242 USA}
\begin{keyword}                           
Distributed Algorithm, Prescribed-time Stabilization, The Shortest Path Algorithm, Distributed Consensus            
\end{keyword}                             

\begin{abstract}                          
Building upon the well-established distributed biased min-consensus protocol, which serves as an efficient approach to address the shortest path problem in a distributed fashion, the continuous-time generalized adaptive Bellman-Ford algorithm (GABF) introduces flexibility by accommodating various forms of distance metrics. This adaptability makes GABF suitable for more complex scenarios, such as time-dependent shortest path problem and robotic path planning. However, existing research on this protocol primarily focuses on asymptotic stability, providing no insights into convergence speed, which limits its practical applications. To address this gap, this paper proposes two control strategies that achieve prescribed-time stabilization of GABF by ensuring its convergence to the stationary value within a user-defined time, thereby broadening its applicability. Simulation scenarios, including robotic manipulator path planning with real-world data and learning-based path planning, are provided to validate the effectiveness of the proposed approaches.
\end{abstract}

\end{frontmatter}

\section{Introduction}
In contrast to classical consensus protocols that  focus on mitigating  differences between  states of individual agents \cite{Qinconsensus}, the distributed biased min-consensus protocol (DBMC) \cite{Zhang2017} enables advanced computations by solving the shortest path problem \cite{bellman1958routing} in a distributed manner. As a result, DBMC finds its merits in various path-finding related applications \cite{ran2024fast,yao2020hierarchical}. The stability results for DBMC include global asymptotic stability \cite{Zhang2017}, regional exponential stability, and convergence within a user-defined time under stringent conditions on the initial states and graph-dependent control parameters; see \cite{mo2025} and references therein. Further, the discretized version of DBMC has been shown to achieve finite time convergence in several studies \cite{zhang2017perturbing,yao2020hierarchical,ran2024fast}, with convergence times depending on the initial states and graphical parameters. 

However, DBMC can only find a path between two nodes in a graph such that the sum of its constituent edge weights is minimized. This approach may not be suitable for more complex scenarios. For instance, in robotic path planning, in addition to the distance cost, one must  also  consider the cost of manipulation which  is position-dependent \cite{shen2023adaptive}. 
 Towards this end, the continuous-time generalized adaptive Bellman-Ford algorithm (GABF) is proposed \cite{mo2021global}, as the continuous-time counterpart to the discrete-time $G$-block described in \cite{mo2022stability}. The strength of GABF lies in its generality and flexibility: instead of prescribing a specific form of dynamics, it defines a set of properties that the evolution must satisfy. This abstraction broadens its applicability across diverse scenarios. In contrast to DBMC, research on design and stability of GABF remains limited. \textcolor{red}{Although self-stabilization and finite time convergence of the discrete-time version of GABF have been established in \cite{VABDP-ACM} and \cite{mo2022stability,miller2025convergence}, the former provides no explicit bound on the convergence time, while the latter yields convergence times that depend on both the graph diameter and the initial conditions. Moreover, existing works on GABF itself are limited to asymptotic stability results \cite{mo2021global}. In contrast, prescribed-time control \cite{PTISSAuto,krishnamurthy2020dynamic} provides a user-defined convergence time guarantee in the continuous-time setting. From a practical perspective, although digital implementations inevitably introduce numerical errors and cannot achieve exact convergence, prescribed-time control still provides a guarantee on the time by which the system reaches a prescribed accuracy. This feature is fundamentally different from asymptotic or finite time convergence, whose convergence times are either infinite or dependent on initial conditions and system parameters, and therefore cannot be certified \emph{a priori}. Such a guarantee is particularly valuable in time-critical applications, such as real-time path planning \cite{shen2023adaptive,lavalle2006planning}, where response times must be certified before deployment. Therefore, developing prescribed-time control strategies for GABF is of both theoretical and practical importance.}

This paper presents two control strategies that achieve prescribed-time stabilization of GABF. Although prescribed-time consensus and distributed optimization protocols \cite{wang2018leader,zhou2024fully,zheng2023specified,WEI,Prescribed1,Prescribed2} admit Lyapunov-based analysis, extending such techniques to GABF is nontrivial because the consensus structure is replaced by a nonlinear, non-differentiable Bellman operator. Instead, the prescribed-time stability of the proposed methods is analyzed by exploiting structural properties of the shortest path problem. 
By characterizing each node’s state trajectory using that of its parent node, it is shown that the nominal GABF is regionally exponentially stable. Building on this result, the first strategy employs a time-space deformation \cite{orlov2022time} to establish that the prescribed-time controlled nominal GABF admits a continuously differentiable solution that converges exactly at the prescribed time.
The second strategy overcomes the limitations of the first, namely the infinite control gain at the prescribed time and the initial overestimation requirement. It first embeds GABF into a fixed time control framework, where node states enter the framework in ascending order of their shortest path lengths, thereby establishing global finite time convergence, and then incorporates a prescribed-time control law to achieve fully distributed prescribed-time stabilization before the preset time. The contribution of this work can be summarized as follows:
\begin{itemize}[leftmargin=*]
	\item This paper studies the prescribed-time stabilization of GABF that can tackle various shortest path-finding problems. Unlike existing results on asymptotic stability \cite{mo2021global} or finite time convergence dependent on graph structure and initial conditions \cite{VABDP-ACM,mo2022stability},
	the developed prescribed-time stability enables GABF to stabilize within a user-defined time. This enhancement extends its applicability to more demanding scenarios, such as robotic path planning \cite{yao2020hierarchical,shen2023adaptive}, where timely resolution of path-finding tasks is critical. 
	\item To eliminate the infinite gain inherent in prescribed-time controllers \cite{song2023prescribed,zheng2023specified,wang2018leader,krishnamurthy2020dynamic} and relax the initial overestimation assumption in DBMC \cite{mo2025}, this paper integrates a fixed time control framework with the prescribed-time control strategy. Leveraging the global finite time convergence of the former, the resulting scheme achieves fully distributed, singularity-free prescribed-time stabilization of GABF. Notably, unlike related works \cite{GUOGE,WEI}, which require undirected graphs, the proposed non-Lyapunov-based analysis accommodates directed graphs. 
\end{itemize}
The remainder of the paper is organized as follows. Section \ref{sec:pre} introduces preliminaries. Section \ref{sec:ptandppt} presents the prescribed-time control strategies. Section \ref{sec:simulations} provides simulation results, and Section \ref{sec:con} concludes the paper.

\section{Preliminaries}\label{sec:pre}
This paper considers a directed graph $G=(V,E)$, where $V = \{1,\cdots,n\}$ denotes the set of nodes and $E$ the set of directed edges. Each edge $(i, j) \in E$ indicates that information can flow from node $j$ to node $i$. The set of out-neighbors of node $i \in V$ is denoted by $\cN(i) = \{ j \in V ~|~ (i,j) \in E  \}$, and $i \notin \cN(i)$ for all $i \in V$. The weight of the edge $(i,j)$ is denoted by $w_{ij}$. A particular subset of nodes forms the set of source nodes, and is denoted by $S \subsetneq V$. Furthermore, define $\mathbb{R}$ as the set of real numbers. For $\Pi \subseteq \mathbb{R}$, define $\Pi_{\geq c} = \{x \in \Pi~|~ x \geq c\}$. 

Let $t_0 = 0$ and $x_i(t)$ denote the initial time and state of node $i$. The nominal GABF specifies the recursion:
\begin{equation}\label{eq:protocol}
	\xd_i(t) \!=\! 
	\begin{cases}
		0, \!\!\!\!& i \in S\\
		-\eta\left(x_i(t) - \min_{j \in \cN(i)}\left\{f_i(x_j(t),w_{ij})\right\}\right), \!\!\!\!& i \notin S
	\end{cases},
\end{equation}
where $f_i: \mathbb{R}_1 \times \mathbb{R}_2 \rightarrow \mathbb{R}_1$ with $\mathbb{R}_1, \mathbb{R}_2 \subseteq \mathbb{R}$ is the kernel function of node $i$, and $\eta > 0$.
In (\ref{eq:protocol}), the states of source nodes remain anchored at their initial values, while the state of each non-source node evolves according to a kernel function that depends on the states of its neighboring nodes and the weights of the connecting edges. In particular, the kernel function $f_i$ has the following two properties: 1) $f_i$ is progressive, i.e., with $\zeta > 0$,
	\begin{equation}\label{eq:progressive}
		f_i(a, b) \geq a + \zeta, ~ \forall a \in \mathbb{R}_1, b \in \mathbb{R}_2;
	\end{equation}
2) $f_i$ is strictly increasing in its first argument.

In contrast to  DBMC using $f_i(x_j(t), w_{ij}) = x_j(t) + w_{ij}$ to minimize the sum of edge weights along the path, the general nature of $f_i$ allows for diverse representations of edge weights. For instance, the manipulability measure, a crucial factor in the cost function of robotic path planning, depends on the robot's position \cite{shen2023adaptive}. 
	
Define $x^*_i$ as the optimal cost required to travel from node $i$ to its nearest source using an optimal policy based on the kernel function specified in (\ref{eq:protocol}). Employing the Bellman's principle of optimality \cite{bellman1958routing}, $x^*_i$ satisfies the following nonlinear system of equations:
\begin{equation}\label{eq:bellman}
	x^*_i = \begin{cases}
		x_i(0), &i \in S \\
		\min_{j \in \cN(i)} \{f_i(x^*_j,w_{ij}) \},& i \notin S
	\end{cases}.
\end{equation}
The basic principle underlying (\ref{eq:bellman}) is that any subpath of an optimal path from node $i$ to its nearest source is itself optimal. In other words, the optimal policy leading to $x^*_i$ also contains optimal policies from any intermediate node to its nearest source.


The following assumption holds throughout the paper.
\begin{assumption}\label{ass:main}
	The graph $G = (V, E)$ considered is directed, the source set $S \neq  \emptyset$ and $S \subset V$, each non-source node has at least one directed path to the source set,
	and $t_0 = 0$ is the initial time.	
\end{assumption}
Under Assumption \ref{ass:main}, GABF naturally fits the stochastic shortest path (SSP) framework of dynamic programming \cite{bertsekas1996neuro}.
\begin{remark}\label{re:SSP}
Let $U(i)$ be a finite set of admissible controls at node $i \in V$, and let $p_{ij}(u) \ge 0$ denote the transition probability from node $i$ to node $j$ under $u \in U(i)$, with $\sum_{j \in V} p_{ij}(u) = 1$. Let $g(i,u,j)$ denote the one-step cost incurred when transitioning from $i$ to $j$ under $u$. For $i \in S$, $p_{ii}(u) = 1$ and $g(i,u,i) = 0$ for all $u \in U(i)$. 
With any vector $J = [J(1), \cdots, J(n)]^\top$, 
the associated Bellman operator $\mathcal{T} : \mathbb{R}^{n} \to \mathbb{R}^{n}$ in SSP is defined as
	\begin{equation*}
		(\mathcal{T}J)(i) =
		\begin{cases}
			0, & i \in S, \\
			\displaystyle \min_{u \in U(i)} \sum_{j \in V} p_{ij}(u)\big( g(i,u,j) + J(j) \big), & i \notin S.
		\end{cases}
	\end{equation*}
In GABF, consider $\mathcal{T}$ with the following specifications:
	$U(i)=\mathcal{N}(i)$, $x_i(0)=0$ for $i\in S$, $p_{ij}(u)=1$ if $u=j$ and $0$ otherwise, and the cost
	$g(i,u,j)=f_i(J(j),w_{ij})-J(j)$.
	Then $\mathcal{T}$ reduces to
	$\mathcal{T}' : \mathbb{R}^{n} \to \mathbb{R}^{n}$, where
	\begin{equation}\label{eq:bellman-gabf}
		(\mathcal{T}'J)(i) =
		\begin{cases}
			x_i(0), & i \in S, \\[1mm]
			\displaystyle \min_{j \in \mathcal{N}(i)} f_i\big(J(j), w_{ij}\big),
			& i \notin S.
		\end{cases}
		\end{equation}
Moreover, the positivity of costs implied by \eqref{eq:progressive}, together with the graph connectivity in Assumption~\ref{ass:main}, ensures that assumptions in \cite[Proposition 2.1]{bertsekas1996neuro} are satisfied, and $\mathcal{T}'$ admits a unique optimal solution $J^\ast = \mathcal{T}'J^\ast$, which coincides with (3),  establishing the uniqueness of the optimal cost vector for GABF.
\end{remark}

We introduce the following definition to facilitate the subsequent analysis.
\begin{definition}\label{def:true}
	Given a connected and directed graph $G=(V,E)$, we call the minimizing $j$ on the right hand side of the second bullet of (\ref{eq:bellman}) a true parent node of $i$. As $i$ may have multiple true parent nodes, we use $\cP(i)$ to denote the set of true parent nodes of node $i$. A source node does not have any true parent node.
	
	Further, consider a sequence of nodes such that the predecessor of each node is one of its true parent nodes. We define $\cD(G)$, the effective diameter of $G$, as the longest length such a sequence can have in graph $G$.
\end{definition}
The directed graph in Figure \ref{fig:example1} consists of 9 nodes: 2 source nodes (nodes 1 and 9, shown in red) and 7 non-source nodes (shown in blue). The numbers above the edge represent the edge weights. $f_i: \mathbb{R}_{\geq 1} \times \mathbb{R} \rightarrow \mathbb{R}_{\geq 1}$ is set as $f_i(a,b) = 2a + |b|$ if $i$ is even and  $f_i(a,b) = a + 2|b|$ otherwise. Let $x_1^* = x_9^* = 1$. $x_4^*= 7$ is achievable via paths $4 \rightarrow 2 \rightarrow 1$ or $4 \rightarrow 3 \rightarrow 1$. Hence, $\cP(4) = \{2,3\}$ and $\cD(G) = 3$ indicated by paths like $6 \rightarrow 8 \rightarrow 9$. 

The following definition will be used to quantify the later stability analysis.
\begin{definition}\label{def:longest}
	We call a path from node $i$ to its nearest source $j \in S$ an optimal path, if it starts at $i$, ends with $j \in S$, and the predecessor of each node in the path is one of its true parent nodes. We call such an optimal path the longest optimal path if it has the most nodes among all optimal paths of $i$. The set $\cF_\ell$ is the set of nodes whose longest optimal paths to the source set have $\ell + 1$ nodes. In particular, $\cF_0 = S$, $\bigcup_{i \in \{0,\cdots, \cD(G) - 1  \}}\cF_i = V ~\mathrm{and}~ \cF_i \cap \cF_j = \emptyset, ~ \forall i \neq j$.
\end{definition}
\begin{figure}[h]
	\centering
	\includegraphics[width=0.9\columnwidth]{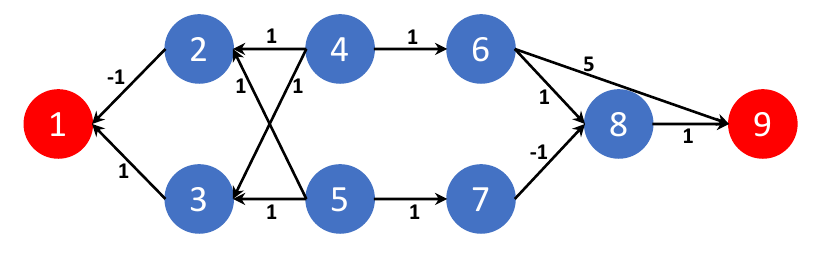}
	\caption{An example illustrating Definitions \ref{def:true} and \ref{def:longest}.} 
	\label{fig:example1}
\end{figure}
We again use Figure \ref{fig:example1} to illustrate Definition \ref{def:longest}. In this example, node 6 has two optimal paths to its nearest source (node 9): path $6 \rightarrow 8 \rightarrow 9$ and path $6 \rightarrow 9$, both of which yields $x_6^* = 7$.  By Definition \ref{def:longest}, $6 \in \cF_2$ but $6 \notin \cF_1$. 

\noindent
\begin{assumption}\label{ass:Lip}
	The kernel function $f_i$ is Lipschitz continuous in its first argument, i.e., there exists an $L > 0$ such that for all $ i \in V\setminus S$ and $b \in \mathbb{R}_2$
	\begin{equation}\label{eq:Lipschitz}
		|f_i(a_1,b) - f_i(a_2,b)| \leq L|a_1 - a_2|, ~ \forall a_1, a_2 \in \mathbb{R}_1.
	\end{equation}
\end{assumption}
The following overestimation assumption applies specifically to the first prescribed-time control strategy.
\begin{assumption}\label{ass:over}
All initial states are assumed to be strict overestimates, i.e., $x_i(0) > x^*_i$ for all $i \in V\setminus S$.
\end{assumption}

\begin{remark}\label{re:over}
Assumption \ref{ass:over} requires the use of certain global parameters, including the number of nodes and the maximum edge weight. Similar requirements also appear in many other prescribed-time control studies, e.g., \cite{wang2018leader,zheng2023specified,WEI}. In practice, such information can be estimated via distributed average and max consensus protocols, respectively \cite{averageconsensus,maxconsensus}. Alternatively, one may initialize the states to be sufficiently large, as Assumption \ref{ass:over} only requires them to be lower bounded.
	
\end{remark}

\section{Prescribed-time stabilization}\label{sec:ptandppt}
In this section, we introduce two control strategies to guide the trajectory of (\ref{eq:protocol}) to the stationary value (\ref{eq:bellman}) within a user-specified time. The underlying theoretical framework is summarized in Figure~\ref{fig:flowchart}. We first revisit the nominal GABF (\ref{eq:protocol}) and establish its regional exponential stability. Unless otherwise noted, all proofs are provided in the Appendix. 
\begin{figure}[h]
	\centering
	\includegraphics[width=1\columnwidth]{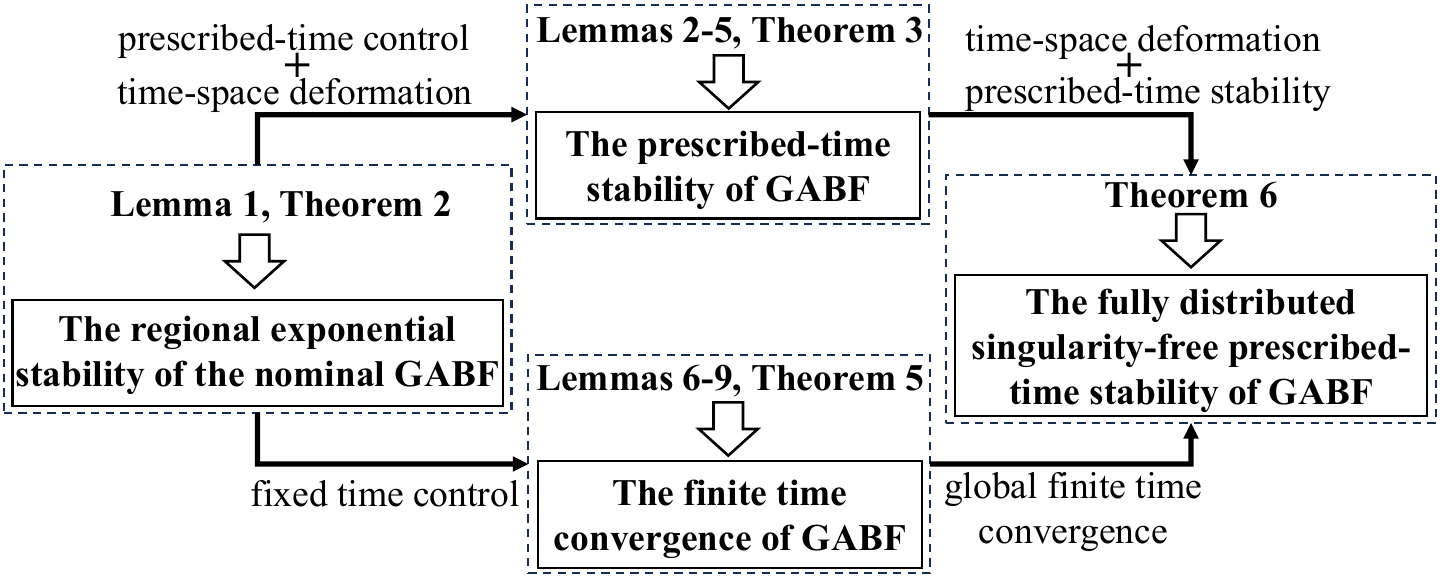}
	\caption{\textcolor{red}{Flowchart illustrating the theoretical framework and the logical relationships among the main results.}}
	\label{fig:flowchart}
\end{figure}
\subsection{The nominal generalized adaptive Bellman-Ford algorithm}
In this subsection, we demonstrate that all states in the nominal GABF will exponentially converge to (\ref{eq:bellman}), provided that all initial states are overestimates.

Define $e_i(t)$ as the error between $x_i(t)$, given in (\ref{eq:protocol}), and its stationary value $x^*_i$ defined in (\ref{eq:bellman}) as 
\begin{equation}\label{eq:error}
	e_i(t) = x_i(t) - x^*_i, \forall i \in V.
\end{equation}
It follows from \cite[Theorem~1]{mo2025} that the nominal GABF~(\ref{eq:protocol}) admits a unique global solution.
\begin{theorem}\label{the:global}
	Suppose Assumptions \ref{ass:main} and \ref{ass:Lip} hold, the solution to (\ref{eq:protocol}) is unique and exists for $[0, +\infty)$.
\end{theorem}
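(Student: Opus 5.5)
The plan is to treat (\ref{eq:protocol}) as an ODE $\dot{x}=F(x)$ on $\mathbb{R}^n$ and show that $F$ is globally Lipschitz. Picard--Lindelöf then gives a unique local solution, and a linear growth bound rules out finite escape time. The paper attributes the result to \cite[Theorem~1]{mo2025}; the argument below adapts that one to general kernels.

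\emph{Step 1: Lipschitz continuity of the min-operator.} For $i\in S$ the right-hand side is identically zero. For $i\notin S$, set $\phi_i(x)=\min_{j\in\cN(i)} f_i(x_j,w_{ij})$. For any $x,y$, pick $j$ attaining the minimum for $y$. Then
\[
\phi_i(x)-\phi_i(y)\le f_i(x_j,w_{ij})-f_i(y_j,w_{ij})\le L|x_j-y_j|
\]
by Assumption~\ref{ass:Lip}. Swapping the roles of $x$ and $y$ gives $|\phi_i(x)-\phi_i(y)|\le L\max_{j\in\cN(i)}|x_j-y_j|\le L\|x-y\|_\infty$. Hence
\[
\|F(x)-F(y)\|_\infty\le \eta(1+L)\|x-y\|_\infty ,
\]
so $F$ is globally Lipschitz. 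This step also shows $F$ is continuous even though it is not differentiable.

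\emph{Step 2: existence, uniqueness and global extension.} Picard--Lindelöf yields a unique maximal solution on some interval $[0,T_{\max})$. Global Lipschitzness gives $\|F(x)\|_\infty\le \|F(0)\|_\infty+\eta(1+L)\|x\|_\infty$. Grönwall's inequality then bounds $\|x(t)\|_\infty$ on every finite interval, so the solution cannot blow up and $T_{\max}=+\infty$. Equivalently, one can invoke the standard global existence theorem for globally Lipschitz vector fields directly. Uniqueness on $[0,\infty)$ follows from the same Lipschitz estimate combined with Grönwall's inequality.

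\emph{Main obstacle: the domain of $f_i$.} Since $f_i$ is only defined on $\mathbb{R}_1\times\mathbb{R}_2$, the trajectory must stay in $\mathbb{R}_1^n$ for $F$ to be well defined. I would handle this with an invariance argument.
\begin{itemize}
\item If $\mathbb{R}_1=\mathbb{R}$ there is nothing to check.
\item If $\mathbb{R}_1$ is an interval bounded below by $c$ (e.g.\ $\mathbb{R}_{\ge 1}$ in the example), note that $f_i$ maps into $\mathbb{R}_1$, so $\phi_i(x)\ge c$. On the face $x_i=c$ we then get $\dot{x}_i=-\eta(c-\phi_i)\ge 0$, so the vector field points inward and $\mathbb{R}_1^n$ is forward invariant by Nagumo's theorem.
\item If $\mathbb{R}_1$ is bounded above, the same argument applies to the upper face.
\end{itemize}
Alternatively, one can extend each $f_i$ Lipschitz-continuously to $\mathbb{R}\times\mathbb{R}_2$ (e.g.\ by constant extension outside $\mathbb{R}_1$) and then apply the invariance argument above, which shows the extension is never active. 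With invariance in hand, Steps 1--2 apply on $\mathbb{R}_1^n$ and the theorem follows.
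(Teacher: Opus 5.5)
Your proof is correct and matches the paper's approach: the paper gives no separate argument and simply cites \cite[Theorem~1]{mo2025}, which rests on Lipschitz continuity of the min-type right-hand side of (\ref{eq:protocol}) followed by standard existence/uniqueness theory. Your Step~1 (a minimum of $L$-Lipschitz functions is $L$-Lipschitz, giving the constant $\eta(1+L)$ via Assumption~\ref{ass:Lip}) is exactly the adaptation from the DBMC kernel to a general $f_i$. Your treatment of the domain $\mathbb{R}_1$ goes beyond what the paper addresses. One cosmetic fix: in the linear-growth bound, use a reference point in $\mathbb{R}_1^n$ (e.g.\ $x(0)$) rather than $F(0)$, which is undefined when $0\notin\mathbb{R}_1$ unless you take the extension route.
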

The following lemma shows that if the initial states of the non-source nodes are overestimates, then they remain so for all $t \in [0,+\infty)$. The proof follows from the monotonicity of $f_i$, (\ref{eq:bellman}), and the comparison inequality $\dot e_i(t)\ge -\eta e_i(t)$, and is therefore omitted.
\begin{lemma}\label{le:over}
	Suppose Assumptions \ref{ass:main}-\ref{ass:over} hold, with $e_i(t)$ defined in (\ref{eq:error}). Then $e_i(t) \geq 0$ for all $t \geq 0$ and $i \in V$.
\end{lemma}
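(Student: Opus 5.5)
For source nodes the claim is immediate: $\dot x_i \equiv 0$ and $x^*_i = x_i(0)$ by (\ref{eq:bellman}), so $e_i(t)\equiv 0$. The work is for non-source nodes. By Theorem \ref{the:global} the solution is unique and absolutely continuous on $[0,+\infty)$, so each $e_i$ is continuous there, and $\min_j f_i(x_j,w_{ij})$ is continuous in $t$ (a finite minimum of functions continuous by Assumption \ref{ass:Lip}).

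The key structural inequality follows from monotonicity. Whenever $e_j(t)\ge 0$ for all $j\in\cN(i)$, strict monotonicity of $f_i$ in its first argument gives $f_i(x_j(t),w_{ij})\ge f_i(x^*_j,w_{ij})$. Taking the minimum and using (\ref{eq:bellman}) yields $\min_{j\in\cN(i)} f_i(x_j(t),w_{ij})\ge x^*_i$. Substituting into (\ref{eq:protocol}) gives
\begin{equation*}
\dot e_i(t) = -\eta\Big(x_i(t)-\min_{j\in\cN(i)} f_i(x_j(t),w_{ij})\Big) \ge -\eta e_i(t).
\end{equation*}
By the comparison lemma, $e_i(t)\ge e^{-\eta t}e_i(0)>0$ on any interval where all neighbours remain nonnegative.

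The hypothesis that neighbours are nonnegative is itself what we are proving, so I will use a first-crossing argument. Let $T=\inf\{t\ge0: \exists i,\ e_i(t)<0\}$ and suppose $T<\infty$. At $t=0$ all $e_i(0)>0$ for $i\notin S$ by Assumption \ref{ass:over}, and $e_i\equiv 0$ for $i\in S$. Continuity then gives $T>0$, and $e_i\ge 0$ on $[0,T]$ for all $i$. On $[0,T]$ the comparison above applies, so $e_i(T)\ge e^{-\eta T}e_i(0)>0$ for every $i\notin S$. By continuity there is $\delta>0$ with $e_i>0$ on $[T,T+\delta)$ for non-source $i$, while source errors stay zero. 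This contradicts the definition of $T$, so $T=\infty$.

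The main obstacle is making this first-crossing step rigorous, since the comparison inequality holds only conditionally on the neighbours' signs. The point that makes it work is the strict positivity $e^{-\eta T}e_i(0)>0$ at time $T$, which rules out the degenerate case where several errors reach zero simultaneously. As a byproduct the argument gives $e_i(t)\ge e^{-\eta t}e_i(0)$ for all $t\ge 0$.
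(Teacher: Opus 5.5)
Your proof is correct and follows the route the paper indicates: the paper omits the proof, citing only the monotonicity of $f_i$, the Bellman equation (\ref{eq:bellman}), and the comparison inequality $\dot e_i(t)\ge -\eta e_i(t)$. Your first-crossing argument supplies the step the paper leaves implicit, namely that this inequality holds only while all neighbours remain overestimates, and it also gives the stronger bound $e_i(t)\ge e^{-\eta t}e_i(0)>0$ for $i\notin S$.
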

Define the largest state error as 
\begin{equation}\label{eq:largestini}
	e_{\max}(t) = \max_{i \in V\setminus S}\{ e_i(t)\}.
\end{equation}
The nominal GABF is regionally exponentially stable, as established in the following theorem.
\begin{theorem}\label{the:exponential}
	Suppose Assumptions \ref{ass:main}-\ref{ass:over} hold, with $e_{\max}(0) $ defined in (\ref{eq:largestini}), $x_i(t)$ defined in (\ref{eq:protocol}) obeys
	\begin{equation}\label{eq:exTraj}
		x_i^* \leq 	x_i(t) \leq x^*_i + e_{\max}(0)\bar{L}(\cD(G)-1)\alpha e^{-0.5\eta t},
	\end{equation}
	where $\bar{L} =  \max\{1, L^{\cD(G) - 2} \}$, with $L$ the Lipschitz constant defined in Assumption \ref{ass:Lip}, $\cD(G)$ defined in Definition \ref{def:true}, and $\alpha = \max_{j \in \{1, \cdots, \cD(G) - 1 \}} \left( \frac{\cD(G) - 1 -j}{0.5 e} \right)^{\cD(G) - 1 - j}$.
\end{theorem}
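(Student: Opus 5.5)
The lower bound $x_i^*\le x_i(t)$ is exactly Lemma \ref{le:over}, so only the upper bound needs work. The plan is to induct over the layers $\cF_\ell$ of Definition \ref{def:longest}, bounding each node's error through the error of one true parent.

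Fix $i\in\cF_\ell$ with $\ell\ge 1$, and pick a true parent $p\in\cP(i)$ lying on a longest optimal path of $i$. Then $p\in\cF_{\ell-1}$. Since the minimum in \eqref{eq:protocol} is at most the term with $j=p$, for $t\ge 0$ we have
\[
\dot e_i(t)\le -\eta e_i(t)+\eta\big(f_i(x_p(t),w_{ip})-f_i(x_p^*,w_{ip})\big)\le -\eta e_i(t)+\eta L e_p(t),
\]
using $x_i^*=f_i(x_p^*,w_{ip})$, Lemma \ref{le:over} ($e_p\ge0$), and Assumption \ref{ass:Lip}. For sources $e_i\equiv 0$.

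By comparison with the linear system,
\[
e_i(t)\le e^{-\eta t}e_i(0)+\eta L\int_0^t e^{-\eta(t-s)}e_p(s)\,ds.
\]
Iterating down the chain $i=i_\ell\to i_{\ell-1}\to\cdots\to i_0\in S$, where $e_{i_0}\equiv0$, gives
\[
e_i(t)\le e_{\max}(0)\sum_{k=0}^{\ell-1}L^k\frac{(\eta t)^k}{k!}e^{-\eta t}.
\]
This follows by induction on $\ell$ from $\int_0^t \eta^k s^k/k!\,ds\cdot\eta=(\eta t)^{k+1}/(k+1)!$. The exponent satisfies $k\le \ell-1\le \cD(G)-2$, hence $L^k\le \bar L=\max\{1,L^{\cD(G)-2}\}$. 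The number of terms is at most $\ell\le\cD(G)-1$.

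Next, trade the polynomial factor for half the exponential. Writing $e^{-\eta t}=e^{-0.5\eta t}e^{-0.5\eta t}$, each term satisfies
\[
\frac{(\eta t)^k}{k!}e^{-0.5\eta t}\le (\eta t)^k e^{-0.5\eta t}\le \sup_{s\ge0}s^ke^{-0.5 s}=\Big(\frac{k}{0.5e}\Big)^k,
\]
with the maximizer at $s=2k$ (and the convention $0^0=1$). Reindexing $k=\cD(G)-1-j$ with $j\in\{1,\dots,\cD(G)-1\}$ shows each term is at most $\alpha$. Summing at most $\cD(G)-1$ terms yields \eqref{eq:exTraj}.

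The main obstacle is justifying the differential inequality and comparison rigorously. The right-hand side of \eqref{eq:protocol} involves a min, so it is only Lipschitz, not differentiable; this is why I would work with the integral (variation-of-constants) form, valid because solutions are $C^1$ by Theorem \ref{the:global}. A second point to check is that $p$ can be chosen in $\cF_{\ell-1}$, so the induction on $\ell$ is well founded. This holds because a longest optimal path of $i$ minus its first node is an optimal path of $p$, and it is longest for $p$, since otherwise $i$ would have a longer optimal path. Actually any $p$ in a lower layer suffices, as the bound is monotone in $\ell$.
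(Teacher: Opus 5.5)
Your proposal is correct and follows essentially the same route as the paper's proof. Both take a chain of true parents ending at a source, iterate $\dot e_{i_{\ell+1}}\le-\eta\,(e_{i_{\ell+1}}-L e_{i_\ell})$ to get $e^{-\eta t}\sum_k L^k(\eta t)^k/k!$, and then use $L^k\le\bar L$ and $(\eta t)^k e^{-0.5\eta t}\le (k/(0.5e))^k$ to reach $\alpha$; the only differences are cosmetic (you index the induction by the layers $\cF_\ell$ and write out the variation-of-constants formula where the paper simply invokes the comparison principle).
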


\subsection{The prescribed-time control strategy}\label{sec:pt}
This subsection introduces our first control strategy, referred to as the prescribed-time control strategy, which eliminates the state error entirely within a prescribed time. Throughout this subsection, we define the state error $e_i(t)$ as in (\ref{eq:error}), with $x_i(t)$ following the dynamics to be formally introduced in (\ref{eq:prespecified}).

The prescribed-time control strategy is implemented by replacing the feedback gain $\eta$ in (\ref{eq:protocol}) with a time-varying scaling function $\bar{\eta}(t)$, so that (\ref{eq:protocol}) becomes 
\begin{equation}\label{eq:prespecified}
	\xd_i(t) \!=\! 
	\begin{cases}
		0, \!\!\!\!& i \in S\\
		-\bar{\eta}(t)\!\left(x_i(t) \!-\! \min_{j \in \cN(i)}\{f_i(x_j(t),w_{ij})\}\right), \!\!\!\!& i \notin S
	\end{cases},
\end{equation}
where 
\begin{equation}\label{eq:ebar}
	\bar{\eta}(t) = 
	\begin{cases}
		\gamma + 2\frac{\dot{\rho}(t)}{\rho(t)}, & t \in [0, T_p) \\
		0 , & t \geq T_p
	\end{cases},
\end{equation}
with $T_p$ a user-specified constant, $\gamma > 0$ and $\rho(t)$ obeying 
\begin{equation}\label{eq:rho}
	\rho(t) = 
	\frac{T_p^{1 + h}}{(T_p - t)^{1 + h}}, t \in [0, T_p),
\end{equation}
with $h > 0$.
The prescribed-time stability of (\ref{eq:prespecified}) is defined as follows.
\begin{definition}
	(\cite{wang2018leader,ning2022fixed}): We say (\ref{eq:prespecified}) achieves prescribed-time stability if for all $i \in V$, $x_i(t) = x_i^*$ for $t \geq T$, with $x_i^*$ defined in (\ref{eq:bellman}) and $T$ the prescribed time independent of the initial conditions.
\end{definition}
Due to the singularity of $\bar{\eta}(t)$ at time $T_p$, the global existence and uniqueness of the solution to (\ref{eq:prespecified}) cannot be established \emph{a priori}. Nevertheless, (\ref{eq:prespecified}) admits a unique solution on the interval $[0, T_p)$ by \cite[Theorem~1]{mo2025} and \cite[Theorem 3.2]{khalil}. 
\begin{lemma}\label{le:local}
	Suppose Assumption \ref{ass:main} and \ref{ass:Lip} hold. Then (\ref{eq:prespecified}) has a unique solution over $[0, T_p)$.
\end{lemma}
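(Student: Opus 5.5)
We will establish Lemma \ref{le:local} with a time-space deformation that maps the finite interval $[0,T_p)$ to an infinite horizon, so that the global result of Theorem \ref{the:global} (that is, \cite[Theorem~1]{mo2025}) can be transferred directly. The gain $\bar{\eta}(t)=\gamma+2(1+h)/(T_p-t)$ is continuous and strictly positive on $[0,T_p)$. Hence $\tau(t)=\int_0^t \bar{\eta}(s)\,ds=\gamma t+2(1+h)\ln\!\big(T_p/(T_p-t)\big)$ is $C^1$, strictly increasing, and maps $[0,T_p)$ bijectively onto $[0,+\infty)$, with $\tau(t)\to+\infty$ as $t\to T_p^-$. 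Let $t(\tau)$ denote its $C^1$ inverse.

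Define $y_i(\tau)=x_i(t(\tau))$. By the chain rule, $dy_i/d\tau=\dot{x}_i(t)/\bar{\eta}(t)$. Consequently, $x$ solves (\ref{eq:prespecified}) on $[0,T_p)$ if and only if $y$ solves the nominal GABF (\ref{eq:protocol}) with $\eta=1$ on $[0,+\infty)$, starting from the same initial condition. The source equations $\dot{x}_i=0$ transform into $dy_i/d\tau=0$ in the same way. Theorem \ref{the:global} requires only Assumptions \ref{ass:main} and \ref{ass:Lip} and holds for any $\eta>0$. It therefore provides a unique global solution $y$ on $[0,\infty)$, and $x(t)=y(\tau(t))$ is then a solution of (\ref{eq:prespecified}) on $[0,T_p)$.

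For uniqueness, suppose $\tilde{x}$ is another solution on $[0,T_p)$. Then $\tilde{y}(\tau)=\tilde{x}(t(\tau))$ is a solution of the nominal system with the same initial data, so $\tilde{y}=y$ and hence $\tilde{x}=x$.

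An alternative route avoids the deformation. On each compact interval $[0,T']$ with $T'<T_p$, the right-hand side of (\ref{eq:prespecified}) is piecewise continuous in $t$ and globally Lipschitz in $x$. This holds because the minimum of Lipschitz functions is Lipschitz with constant $\max\{1,L\}$ per coordinate, and $\bar{\eta}$ is bounded on $[0,T']$. \cite[Theorem 3.2]{khalil} then yields a unique solution on $[0,T']$. Letting $T'\uparrow T_p$ and using consistency of the solutions on overlapping intervals gives the unique solution on $[0,T_p)$.

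The main obstacle is verifying that the Lipschitz property survives the $\min$ operator and any domain restriction $\mathbb{R}_1$, so that the global Lipschitz condition needed by \cite[Theorem 3.2]{khalil} holds. This is inherited from \cite[Theorem~1]{mo2025}, which we invoke as the paper does.
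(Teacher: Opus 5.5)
Your proof is correct. Your ``alternative route'' is essentially the paper's own proof. The paper takes the Lipschitz property of the right-hand side from \cite[Theorem~1]{mo2025} and combines it with \cite[Theorem 3.2]{khalil}. This works because $\bar{\eta}(t)=\gamma+2(1+h)/(T_p-t)$ is continuous and bounded on every $[0,T']$ with $T'<T_p$, and the solutions on these compact intervals are then implicitly patched together.

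Your primary route is genuinely different. You use the change of time $\tau(t)=\gamma t-2(1+h)\ln\frac{T_p-t}{T_p}$, which the paper only introduces later in the proof of Lemma~\ref{le:leftlimit}. This puts solutions of (\ref{eq:prespecified}) on $[0,T_p)$ in one-to-one correspondence with solutions of (\ref{eq:protocol}) with $\eta=1$ on $[0,+\infty)$, and you then read off existence and uniqueness from Theorem~\ref{the:global}.

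What this buys:
\begin{itemize}
\item You get existence on the whole half-open interval in one step, with no exhaustion by compact subintervals.
\item Uniqueness transfers cleanly, since the substitution is a $C^1$ diffeomorphism with $\dot{\tau}=\bar{\eta}>0$.
\item It shows that Lemma~\ref{le:local} and the representation $x_i(t)=y_i(\tau(t))$ later used in Lemma~\ref{le:leftlimit} come from the same substitution.
\end{itemize}

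The price is that you lean on the global, no-finite-escape statement of Theorem~\ref{the:global} for the autonomous nominal system. The paper's direct route needs only Lipschitz bounds that are uniform on compact time intervals.

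Both routes rest on the same Lipschitz property inherited from \cite[Theorem~1]{mo2025}. That includes the tacit requirement that trajectories remain in $\mathbb{R}_1$, so that $f_i$ stays defined, which you rightly flag.
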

By the same argument as in Lemma~\ref{le:over}, the states of the non-source nodes in (\ref{eq:prespecified}) remain overestimated on $[0,T_p)$. The result follows directly from applying \cite[Lemma~6]{mo2025} to the comparison inequality $\dot e_i(t)\ge -\bar{\eta}(t)e_i(t)$, and the proof is therefore omitted.
\begin{lemma}\label{le:alloverpt}
	Suppose Assumption \ref{ass:main}-\ref{ass:over} hold. Then $e_i(t) \geq 0$ for all $i \in V$ and all $t \in [0, T_p)$.
\end{lemma}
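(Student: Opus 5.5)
Fix $i\in V\setminus S$ and work on $[0,T_p)$, where Lemma~\ref{le:local} provides a unique, absolutely continuous solution of (\ref{eq:prespecified}). Source nodes are trivial: $\dot x_i=0$ and $x_i^*=x_i(0)$ give $e_i(t)\equiv 0$. For non-source nodes the plan is to derive a differential inequality for $e_i$ and apply a comparison argument. The delicate point is that the inequality $\dot e_i\ge -\bar\eta(t) e_i$ holds only while the neighbours' errors are nonnegative. So the argument has to be a simultaneous, contradiction-type argument, not a node-by-node one.

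\textbf{Step 1 (the key inequality).} Let $j^*\in\cP(i)$. By (\ref{eq:bellman}), $x_i^* = f_i(x_{j^*}^*,w_{ij^*})\le f_i(x_j^*,w_{ij})$ for every $j\in\cN(i)$. Now suppose that at time $t$ every node satisfies $e_j(t)\ge 0$. Since $f_i$ is strictly increasing in its first argument,
\[
\min_{j\in\cN(i)} f_i(x_j(t),w_{ij}) \ge \min_{j\in\cN(i)} f_i(x_j^*,w_{ij}) = x_i^*.
\]
Because $\bar\eta(t)=\gamma+2(1+h)/(T_p-t)>0$ on $[0,T_p)$, it follows that
\[
\dot e_i(t) = -\bar\eta(t)\bigl(x_i(t)-\min_{j\in\cN(i)} f_i(x_j(t),w_{ij})\bigr)\ge -\bar\eta(t)e_i(t).
\]

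\textbf{Step 2 (comparison via a first-exit argument).} Assume the conclusion fails. By continuity there is a first time $\tau$ with some $e_k(\tau)<0$. Assumption~\ref{ass:over} gives $e_i(0)>0$ for all non-source $i$, so in fact
\[
\tau^*=\inf\{t\in[0,T_p): \min_{i\in V\setminus S} e_i(t)\le 0\} >0,
\]
and on $[0,\tau^*)$ all errors are strictly positive. Step 1 applies on this interval. The comparison lemma \cite[Lemma~6]{mo2025} with the integrating factor $\exp\bigl(\int_0^t\bar\eta(s)\,ds\bigr)$ then yields
\[
e_i(t)\ge e_i(0)\exp\Bigl(-\int_0^t\bar\eta(s)\,ds\Bigr) = e_i(0)e^{-\gamma t}\Bigl(\tfrac{T_p-t}{T_p}\Bigr)^{2(1+h)}.
\]
This bound is continuous and strictly positive on $[0,\tau^*]$ whenever $\tau^*<T_p$. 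By continuity, every $e_i(\tau^*)>0$, which contradicts the definition of $\tau^*$. Hence $\tau^*=T_p$ and $e_i(t)>0\ge 0$ on $[0,T_p)$.

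\textbf{Main obstacle.} The main difficulty is the circular dependency: node $i$'s inequality requires its neighbours to be overestimates. The first-exit time argument resolves this by working with all nodes at once. Strict positivity of the initial errors (Assumption~\ref{ass:over}) is essential here, because it makes $\tau^*>0$. Two further points need checking. First, $\bar\eta$ is locally integrable on every $[0,\tau]$ with $\tau<T_p$, so the exponential factor stays finite and positive up to $\tau^*$. Second, the derivative inequality need only hold almost everywhere, since the min makes the right-hand side merely Lipschitz. Both are handled by the standard comparison lemma for absolutely continuous functions.
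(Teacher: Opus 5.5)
Your proof is correct and follows the same route as the paper: derive $\dot e_i(t)\ge -\bar{\eta}(t)e_i(t)$ from the monotonicity of $f_i$ and (\ref{eq:bellman}), then apply \cite[Lemma~6]{mo2025}. Your first-exit argument, which uses the strict overestimation in Assumption~\ref{ass:over} to get $\tau^*>0$, makes explicit the simultaneous treatment of all nodes that the paper's omitted proof leaves implicit, and it also gives the explicit lower bound $e_i(t)\ge e_i(0)e^{-\gamma t}\bigl(\tfrac{T_p-t}{T_p}\bigr)^{2(1+h)}$ that the paper later invokes in the proof of Lemma~\ref{le:continuity}.
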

By employing the time-space deformation approach, which squeezes the infinite time interval $[0, +\infty)$ to the prescribed one $[0, T_p)$ via a change of state variables, the following lemma demonstrates that as time approaches the prescribed time instant, the state of each non-source node converges to its stationary value. Furthermore, the upper bound on the trajectory of each non-source node over $[0, T_p)$ is also characterized.
\begin{lemma}\label{le:leftlimit}
	Suppose Assumptions \ref{ass:main}-\ref{ass:over} hold, consider (\ref{eq:prespecified}), with $x_i^*$ defined in (\ref{eq:bellman}), for all $i \in V\setminus S$, there holds
	\begin{equation}\label{eq:leftlimit}
		\lim_{t \rightarrow T_p^-} x_i(t) = x^*_i.
	\end{equation}
	In particular,  for $t \in [0, T_p)$, $x_i(t)$ obeys
	\begin{equation}\label{eq:boundt}
		x^*_i \leq	x_i(t) \leq x_i^* + Me^{-0.5 \left (\gamma t - (2+2h) \ln\frac{T_p - t}{T_p}\right )},
	\end{equation}
	where  $M  = e_{\max}(0)\bar{L}\alpha(\cD(G)-1)$, with $\alpha$ defined in Theorem \ref{the:exponential}, $e_{\max}(0)$ is defined in (\ref{eq:largestini}), and $\bar{L} = \max\{1, L^{\cD(G) - 2}\}$ with $ L$ and $\cD(G)$ defined in (\ref{eq:Lipschitz})  and Definition \ref{def:true}, respectively.
\end{lemma}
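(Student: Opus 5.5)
The plan is a time-space deformation: introduce a new time variable that maps $[0,T_p)$ onto $[0,+\infty)$, so that (\ref{eq:prespecified}) becomes the nominal GABF (\ref{eq:protocol}) with gain $\eta=1$. Then transport the exponential bound of Theorem \ref{the:exponential} back to the original time scale. Concretely, define
\[
\tau(t)=\int_0^t \bar{\eta}(s)\,ds=\gamma t+2\ln\rho(t)=\gamma t-(2+2h)\ln\frac{T_p-t}{T_p},\qquad t\in[0,T_p).
\]
Since $\bar\eta(t)\ge\gamma>0$ (because $\dot\rho/\rho=(1+h)/(T_p-t)>0$), $\tau$ is a strictly increasing $C^1$ bijection from $[0,T_p)$ onto $[0,+\infty)$, and $\tau(t)\to+\infty$ as $t\to T_p^-$. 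Set $y_i(\tau)=x_i(t(\tau))$. By the chain rule, $\frac{dy_i}{d\tau}=\dot x_i(t)/\bar\eta(t)$. This equals $0$ for $i\in S$, and equals $-(y_i-\min_{j\in\cN(i)}f_i(y_j,w_{ij}))$ for $i\notin S$. Hence $y$ solves (\ref{eq:protocol}) with $\eta=1$ and the same initial condition $y(0)=x(0)$.

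Next, justify that this correspondence is between the unique solutions. By Theorem \ref{the:global}, the $\eta=1$ system has a unique global solution on $[0,+\infty)$. Composing it with $\tau(t)$ gives a $C^1$ function on $[0,T_p)$ satisfying (\ref{eq:prespecified}). By Lemma \ref{le:local}, this function coincides with $x(t)$.

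Assumptions \ref{ass:main}-\ref{ass:over} hold for $y$ exactly as for $x$, since the initial states are identical. So Theorem \ref{the:exponential} with $\eta=1$ gives
\[
x_i^*\le y_i(\tau)\le x_i^*+e_{\max}(0)\bar L(\cD(G)-1)\alpha e^{-0.5\tau}.
\]
Substituting $\tau=\tau(t)$ yields exactly (\ref{eq:boundt}) with $M$ as stated. The lower bound is also consistent with Lemma \ref{le:alloverpt}.

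For (\ref{eq:leftlimit}), note that
\[
e^{-0.5\tau(t)}=e^{-0.5\gamma t}\left(\frac{T_p-t}{T_p}\right)^{1+h}\to0 \quad\text{as } t\to T_p^-.
\]
Squeezing between the two bounds then gives the limit.

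The only delicate point is the change of variables: verifying that $\tau$ is a genuine $C^1$ diffeomorphism onto $[0,\infty)$ and that uniqueness transfers, so the bound from Theorem \ref{the:exponential} applies to the actual trajectory. Everything else is direct substitution.
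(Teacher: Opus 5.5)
Your proposal is correct and follows essentially the same route as the paper. The paper also defines $\tau(t)=\gamma t-2(1+h)\ln\frac{T_p-t}{T_p}$ with $\dot\tau=\bar\eta$, shows that $y_i(\tau)=x_i(t(\tau))$ obeys the nominal GABF with $\eta=1$, and transports the bound of Theorem~\ref{the:exponential} back to the original time scale, using Lemma~\ref{le:alloverpt} for the lower bound. Your uniqueness-transfer argument and your explicit computation $e^{-0.5\tau(t)}=e^{-0.5\gamma t}\bigl(\frac{T_p-t}{T_p}\bigr)^{1+h}\to 0$ for the limit just fill in steps the paper leaves implicit.
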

To ensure that the states of the non-source nodes remain at their stationary values at the prescribed time instant and thereafter, given that the feedback gain $\bar{\eta}(t)$ becomes a zero function beyond this time instant, we need to show that the limit of the dynamics described by (\ref{eq:prespecified}) approaches zero as time reaches the prescribed time instant. Leveraging the upper bound on the nodes' trajectories in Lemma \ref{le:leftlimit}, the following lemma further establishes the continuity of the dynamics.
\begin{lemma}\label{le:continuity}
	Suppose Assumptions \ref{ass:main}-\ref{ass:over} hold, consider (\ref{eq:prespecified}) with $T_p$ and $h$ defined in (\ref{eq:ebar})  and (\ref{eq:rho}), respectively, if $h > 0$, there holds
	\begin{equation}\label{eq:leftdot}
		\lim_{t \rightarrow T_p^-} \xd_i(t) = \ed_i(t) = 0,~ \forall i \in V.
	\end{equation}
\end{lemma}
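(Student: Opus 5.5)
Source nodes have $\dot x_i\equiv 0$, so the claim is trivial for $i\in S$. For $i\notin S$, $\dot x_i(t)=-\bar\eta(t)\big(x_i(t)-\min_{j\in\cN(i)} f_i(x_j(t),w_{ij})\big)$ on $[0,T_p)$. We bound both factors. The gain behaves like $1/(T_p-t)$, and the bracket decays like $(T_p-t)^{1+h}$ by Lemma~\ref{le:leftlimit}. Their product then vanishes as $(T_p-t)^h$, and this is where $h>0$ is used.

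\emph{Step 1 (the gain).} From (\ref{eq:rho}), $\dot\rho(t)/\rho(t)=(1+h)/(T_p-t)$. Hence $\bar\eta(t)=\gamma+2(1+h)/(T_p-t)$ on $[0,T_p)$, which is positive.

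\emph{Step 2 (the bracket).} Write $x_i-\min_j f_i(x_j,w_{ij})=e_i(t)+\big(x_i^*-\min_j f_i(x_j,w_{ij})\big)$.
\begin{itemize}
\item Lower bound on the bracket: by Lemma~\ref{le:alloverpt} and monotonicity of $f_i$, $\min_j f_i(x_j,w_{ij})\ge \min_j f_i(x_j^*,w_{ij})=x_i^*$. So the second term is $\le 0$ and the bracket is at most $e_i(t)$.
\item Upper bound on the bracket: pick a true parent $p\in\cP(i)$. Then $\min_j f_i(x_j,w_{ij})\le f_i(x_p,w_{ip})\le f_i(x_p^*,w_{ip})+Le_p(t)=x_i^*+Le_p(t)$ by Assumption~\ref{ass:Lip}. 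So the bracket is at least $-Le_p(t)$.
\end{itemize}
Together, $|x_i-\min_j f_i(x_j,w_{ij})|\le \max\{1,L\}\max_{k\in V\setminus S}e_k(t)$, using $e_k\equiv 0$ for sources.

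\emph{Step 3 (decay of the errors).} Lemma~\ref{le:leftlimit}, (\ref{eq:boundt}), gives
\[
0\le e_k(t)\le M e^{-0.5\gamma t}\Big(\tfrac{T_p-t}{T_p}\Big)^{1+h}\le M\Big(\tfrac{T_p-t}{T_p}\Big)^{1+h}.
\]

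\emph{Step 4 (combine).} Multiplying the bounds from Steps 1--3,
\[
|\dot x_i(t)|\le \max\{1,L\}\,M\Big(\gamma+\tfrac{2(1+h)}{T_p-t}\Big)\Big(\tfrac{T_p-t}{T_p}\Big)^{1+h}
\le C\big((T_p-t)^{1+h}+(T_p-t)^{h}\big)
\]
for a constant $C$. Since $h>0$, the right side tends to $0$ as $t\to T_p^-$. Because $x_i^*$ is constant, $\dot e_i=\dot x_i$, which gives (\ref{eq:leftdot}).

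The only delicate point is the two-sided estimate of the bracket in Step 2. A one-sided bound would leave the sign of $\dot x_i$ uncontrolled, and the true-parent/Lipschitz argument is what supplies the other side. Everything else is direct substitution of Lemma~\ref{le:leftlimit}.
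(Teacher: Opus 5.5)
Your proof is correct and follows the paper's own argument. The paper bounds $\dot e_i$ below by $-\bar\eta(t)e_i(t)$ using overestimation and monotonicity, and above by $-\bar\eta(t)(e_i(t)-Le_j(t))$ for $j\in\cP(i)$ using the Lipschitz property, then inserts (\ref{eq:boundt}). The only difference is in the upper estimate: the paper invokes an extra lower bound $e_i(t)\ge e_i(0)e^{-\lambda t}/\rho^2(t)$, whereas you simply use $e_i\ge 0$ from Lemma \ref{le:alloverpt}, which suffices and is cleaner.
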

Now we present the prescribed-time stability of (\ref{eq:prespecified}).
\begin{theorem}\label{the:existence}
	Suppose Assumptions \ref{ass:main}-\ref{ass:over} hold, consider (\ref{eq:prespecified}), with $x^*_i$ defined in (\ref{eq:bellman}), if $h > 0$, then for all $i \in V$, $x_i(t)$ is continuously differentiable on $[0, +\infty)$ and
	\begin{equation}
		x_i(t) = x^*_i, ~ \forall i \in V ~\mathrm{and}~ \forall t \geq T_p.
	\end{equation}
\end{theorem}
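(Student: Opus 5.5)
The plan is to build the global solution by gluing: the unique solution on $[0,T_p)$ from Lemma~\ref{le:local}, extended at and beyond $T_p$ by the constant $x_i^*$. Then we check that the glued function is $C^1$ and satisfies (\ref{eq:prespecified}) everywhere. Concretely, define $\tilde x_i(t)$ to equal the solution of Lemma~\ref{le:local} for $t\in[0,T_p)$ and $\tilde x_i(t)=x_i^*$ for $t\ge T_p$. For source nodes nothing needs checking, since $\dot x_i\equiv 0$ and $x_i(0)=x_i^*$ by (\ref{eq:bellman}). For $i\in V\setminus S$ there are three things to verify.

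First, continuity at $T_p$. Lemma~\ref{le:leftlimit}, via (\ref{eq:leftlimit}), gives $\lim_{t\to T_p^-}x_i(t)=x_i^*=\tilde x_i(T_p)$. So $\tilde x_i$ is continuous on $[0,+\infty)$, and it is $C^1$ on $[0,T_p)$ and on $(T_p,\infty)$ separately.

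Second, differentiability at $T_p$, with a continuous derivative. On the right, the derivative is $0$. On the left, apply the mean value theorem on $[t,T_p]$, using continuity at $T_p$ and differentiability on $(t,T_p)$. This gives $(x_i^*-x_i(t))/(T_p-t)=\dot x_i(\xi_t)$ for some $\xi_t\in(t,T_p)$. Lemma~\ref{le:continuity} gives $\dot x_i(\xi_t)\to 0$, so the left derivative exists and equals $0$. Hence $\dot{\tilde x}_i(T_p)=0$, and $\dot{\tilde x}_i$ is continuous at $T_p$, again by Lemma~\ref{le:continuity}. Thus $\tilde x_i\in C^1[0,\infty)$.

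Third, $\tilde x$ satisfies (\ref{eq:prespecified}) for $t\ge T_p$. There $\bar\eta(t)=0$, so the right-hand side is $0$, which matches $\dot{\tilde x}_i=0$. This also holds at $t=T_p$, where $\bar\eta(T_p)=0$ by (\ref{eq:ebar}) and the derivative equals $0$ as just shown.

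Uniqueness beyond $T_p$ follows because any solution that is continuous through $T_p$ must take the value $x_i^*$ at $T_p$ by (\ref{eq:leftlimit}). For $t\ge T_p$ the vector field vanishes, so the solution stays constant, and therefore $x_i(t)=x_i^*$ for all $t\ge T_p$.

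The main obstacle is the differentiability at $T_p$, since $\bar\eta$ blows up there. The argument above avoids any estimate on $\bar\eta$ by relying entirely on Lemma~\ref{le:continuity}, which is where the bound (\ref{eq:boundt}) and the condition $h>0$ enter. The only care needed is to justify the mean-value step, which uses continuity at $T_p$ from Lemma~\ref{le:leftlimit}.
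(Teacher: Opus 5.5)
Your proof is correct and follows the same route as the paper: continuity at $T_p$ with $x_i(T_p)=x_i^*$ from Lemma~\ref{le:leftlimit}, a continuous derivative through $T_p$ from Lemma~\ref{le:continuity}, and $\dot x_i\equiv 0$ for $t\ge T_p$ because $\bar\eta$ vanishes there. Your mean-value-theorem step shows that the left derivative at $T_p$ exists and equals $0$, which makes explicit a detail the paper's three-line proof leaves implicit.
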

\begin{proof}
	From (\ref{eq:prespecified}), (\ref{eq:ebar}) and Lemma \ref{le:continuity}, $x_i(t)$ is continuous for all $i \in V$ over $[0, +\infty)$, and thus $x_i(T_p) = x^*_i$ by Lemma \ref{le:leftlimit}. Further, as $\xd_i(t) = \ed_i(t) = 0$ for all $t \geq T_p$ by (\ref{eq:ebar}), we have $x_i(t) = x^*_i$ for all $i \in V$ and $t \geq T_p.$	
\end{proof}
With $\bar{\eta}(t)$ defined in (\ref{eq:ebar}), the control input, i.e., the right-hand side of (\ref{eq:prespecified}), remains continuous and bounded even when $\bar{\eta}(t)$ grows unbounded at the preset time instant. As a result, the prescribed-time control strategy ensures a continuously differentiable trajectory of $x_i(t)$.

The theorem below concludes this section by demonstrating that the solution to (\ref{eq:prespecified}) is also unique.
\begin{theorem}
	Suppose Assumptions \ref{ass:main}-\ref{ass:over} hold, (\ref{eq:prespecified}) admits a unique solution for $t \in [0, +\infty)$.
\end{theorem}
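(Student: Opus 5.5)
We split the time axis at $T_p$ and handle existence and uniqueness separately on $[0,T_p)$, at $t=T_p$, and on $[T_p,+\infty)$.

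\emph{Existence.} By Lemma~\ref{le:local} there is a unique solution $x(t)$ on $[0,T_p)$. By Lemma~\ref{le:leftlimit} the left limit $\lim_{t\to T_p^-}x_i(t)=x_i^*$ exists. Set $x_i(t)=x_i^*$ for $t\geq T_p$. Because $\bar{\eta}(t)=0$ for $t\ge T_p$, this extension satisfies (\ref{eq:prespecified}) there. Lemma~\ref{le:continuity} shows that the right-hand side tends to zero as $t\to T_p^-$. Together with Theorem~\ref{the:existence}, this makes the concatenated function continuously differentiable on $[0,+\infty)$, so it is a global solution.

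\emph{Uniqueness.} Let $y(t)$ be any solution of (\ref{eq:prespecified}) on $[0,+\infty)$ with $y(0)=x(0)$. We take a solution to be an absolutely continuous (in particular continuous) function satisfying the equation wherever it is defined.
\begin{itemize}[leftmargin=*]
\item On $[0,T_p)$, the vector field $-\bar{\eta}(t)\left(x_i-\min_{j}f_i(x_j,w_{ij})\right)$ is continuous in $t$. It is also locally Lipschitz in $x$, since the minimum of Lipschitz functions is Lipschitz with constant $L$ by Assumption~\ref{ass:Lip}. Hence Lemma~\ref{le:local} (equivalently \cite[Theorem 3.2]{khalil} applied on every compact $[0,T]\subset[0,T_p)$) gives $y=x$ on $[0,T_p)$.
\item By continuity of $y$ at $T_p$, we get $y_i(T_p)=\lim_{t\to T_p^-}x_i(t)=x_i^*$.
\item For $t\ge T_p$, the dynamics read $\dot y_i=0$. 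Hence $y_i(t)=y_i(T_p)=x_i^*$, so $y\equiv x$ on $[T_p,+\infty)$ as well.
\end{itemize}

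\emph{Main obstacle.} The delicate point is the instant $T_p$, where $\bar{\eta}$ is singular and no standard Lipschitz uniqueness theorem applies across it. The plan avoids this by never applying an existence--uniqueness theorem on an interval containing $T_p$. Instead it uses (i) uniqueness on every compact subinterval of $[0,T_p)$, (ii) continuity of any solution at $T_p$, which pins down $y(T_p)$ via the left limit from Lemma~\ref{le:leftlimit}, and (iii) the trivial dynamics after $T_p$.

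Any solution is by definition continuous at $T_p$, so (ii) needs no boundedness argument for $\bar{\eta}$. If the definition of solution is only required on $[0,T_p)\cup(T_p,\infty)$, we would instead invoke the continuous differentiability from Theorem~\ref{the:existence} to justify taking the value at $T_p$ as the limit. Assumption~\ref{ass:over} is used only through Lemma~\ref{le:leftlimit}, which guarantees that the left limit equals $x^*$.
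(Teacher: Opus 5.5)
Your proof is correct, but it is organized differently from the paper's.

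The paper argues in one stroke. It observes that on every $[0,t_1]$ with $t_1<T_p$ the right-hand side of (\ref{eq:prespecified}) is continuous in $t$ and Lipschitz in the state. It then extracts from (\ref{eq:lowerli}) and (\ref{eq:ute}), in the proof of Lemma~\ref{le:continuity}, a uniform bound $|\dot{x}_i(t)|\le\kappa$ valid for all $t\ge 0$. Finally it invokes Theorem~2.40 of \cite{haddad2008nonlinear} to conclude uniqueness on $[0,+\infty)$.

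You instead take existence from Lemma~\ref{le:local}, Lemma~\ref{le:leftlimit} and Theorem~\ref{the:existence}, and prove uniqueness by hand. A competitor agrees with $x$ on $[0,T_p)$ by Lipschitz uniqueness on compact subintervals. It is pinned to $x^*$ at $T_p$ by continuity, and it is frozen afterwards since $\bar{\eta}\equiv 0$ on $[T_p,+\infty)$.

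Your route is more elementary and shows clearly why the singularity of $\bar{\eta}$ at $T_p$ is harmless for uniqueness. No bound on the control input near $T_p$ is needed. In fact the uniqueness half does not use Assumption~\ref{ass:over} at all, since two continuous solutions that coincide on $[0,T_p)$ also coincide at $T_p$. Overestimation and the derivative estimates enter only through existence of a continuous (indeed $C^1$) extension across $T_p$. That is the role the bound $\kappa$ plays in the paper, whose version handles both halves together at the price of relying on an external theorem.

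One small correction: drop your fallback remark about solutions that are only required on $[0,T_p)\cup(T_p,+\infty)$. Theorem~\ref{the:existence} concerns the constructed solution only and cannot give continuity of an arbitrary competitor at $T_p$. Under such a weakened notion uniqueness genuinely fails, because any constant vector satisfies $\dot{y}=0$ on $(T_p,+\infty)$. Continuity on all of $[0,+\infty)$ is part of the standard notion of solution, and it is exactly what your main argument uses.
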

\begin{proof}
	It follows from \cite[Theorem 1]{mo2025} that (\ref{eq:prespecified}) is Lipschitz continuous in $x_i(t)$ for all $t \in [0, t_1]$, with $t_1 < T_p$, and is continuous on $[0, t_1]$. Further, it follows from (\ref{eq:lowerli}) and (\ref{eq:ute}) in Lemma \ref{le:continuity} that there exists a $\kappa > 0$, dependent on the initial states, such that $|\xd_i(t)| \leq \kappa$ for all $t \geq 0$ and $i \in V$. From Theorem 2.40 in \cite{haddad2008nonlinear}, the solution to (\ref{eq:prespecified}) is unique.
\end{proof}
	
	

\subsection{The fixed time control strategy}
Since the overestimation requirement in Assumption~\ref{ass:over} still implicitly relies on global information about the network topology, and the infinite gain at the prescribed time instant may impede practical implementation in real-world scenarios, the remainder of this section presents a fully distributed, singularity-free prescribed-time control strategy for GABF. 

In this subsection,  we first present the fixed time control strategy for GABF. \emph{Throughout the remainder of Section \ref{sec:ptandppt}, the overestimation of initial states specified in Assumption \ref{ass:over} is no longer required.} Similarly, $e_i(t)$ used throughout this subsection denotes the state error between the desired state $x_i^*$ and the state estimate $x_i(t)$ under the fixed time control strategy defined as follows:
\begin{equation}\label{eq:singularity}
	\xd_i(t) \!=\! 
	\begin{cases}
		0, & i \in S\\
		-\eta \left ( \beta_1\xh_i(t)^{\frac{p_1}{p_2}} + \beta_2\xh_i(t)^{\frac{m_1}{m_2}} \right ), & i \notin S
	\end{cases},
\end{equation}
where $\eta, \beta_1, \beta_2 > 0$,  $m_1, m_2, p_1$ and $p_2$ are all positive odd numbers with $m_1 > m_2$ and $p_1 < p_2$, and 
\begin{equation}\label{eq:xhat}
	\xh_i(t) = x_i(t) - \min_{j \in \cN(i)}\{f_i(x_j(t),w_{ij})\}.
\end{equation}
The following lemma first establishes that all solutions to (\ref{eq:singularity}) are upper bounded.
\begin{lemma}\label{le:globalexistence}
	Suppose Assumptions \ref{ass:main} and \ref{ass:Lip} hold. Then all solutions to (\ref{eq:singularity}) are upper bounded over the maximal interval of existence.
\end{lemma}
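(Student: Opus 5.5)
We show that the maximum state over non-source nodes cannot increase beyond its initial value. Let $I=[0,t_{\max})$ be the maximal interval of existence; a local solution exists because the right-hand side of (\ref{eq:singularity}) is continuous in the states. Here $f_i$ is continuous by Assumption~\ref{ass:Lip}, the minimum of continuous functions is continuous, and odd-rational powers are continuous on $\mathbb{R}$. Define
\begin{equation*}
\bar{x}(t)=\max\Big\{\max_{i\in V\setminus S} x_i(t),\ \max_{j\in S} x_j(0)\Big\}.
\end{equation*}
Source states are constant, so $\bar{x}(t)$ bounds every state. The aim is to show that $\bar{x}(t)$ is non-increasing on $I$, which gives $x_i(t)\le \bar{x}(0)$ for all $i\in V$ and $t\in I$.

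\textbf{Key step.} Fix $t$ and a non-source node $i$ that attains the maximum, so $x_i(t)=\bar{x}(t)$. We need $\xd_i(t)\le 0$.
\begin{itemize}
\item For every $j\in\cN(i)$, the progressive property (\ref{eq:progressive}) gives $f_i(x_j(t),w_{ij})\ge x_j(t)+\zeta$.
\item This lower bound alone does not help. Because $x_j(t)$ may be much smaller than $x_i(t)$, $\xh_i(t)$ need not be negative.
\item Instead, note that $\xh_i(t)\le 0$ would be needed for $\xd_i\ge 0$. Since $s\mapsto \beta_1 s^{p_1/p_2}+\beta_2 s^{m_1/m_2}$ is odd and increasing, $\xd_i(t)>0$ holds iff $\xh_i(t)<0$, i.e.\ iff $x_i(t)<\min_{j}f_i(x_j(t),w_{ij})$.
\item So increases happen only while $x_i$ sits strictly below the value $\min_j f_i(x_j,w_{ij})$ computed from its neighbors.
\end{itemize}

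\textbf{Correcting the quantity.} Because of this, $\bar{x}$ itself is not the right quantity. Use instead
\begin{equation*}
\Phi(t)=\max\Big\{\bar{x}(t),\ \max_{i\in V\setminus S}\ \min_{j\in\cN(i)} f_i(x_j(t),w_{ij})\Big\}.
\end{equation*}
Equivalently, bound each $x_i$ by comparison against a constant $B$ chosen so that the box $\{x\le B\}$ is forward invariant. Consider $x_i(t)=B$ at a first-crossing time while all other states are $\le B$. Then $\xh_i=B-\min_j f_i(x_j,w_{ij})$, and monotonicity gives $\xh_i\ge B-\min_j f_i(B,w_{ij})$. This is negative by the progressive property, so the crude invariance fails.

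\textbf{Resolution.} Exploit the path structure of Assumption~\ref{ass:main}, as in Definition~\ref{def:longest}.
\begin{itemize}
\item Order the nodes by hop distance to $S$ along a fixed spanning in-tree: each non-source $i$ gets a fixed neighbor $\pi(i)$ closer to $S$.
\item For $i\notin S$, $\xd_i>0$ requires $x_i<\min_j f_i(x_j,w_{ij})\le f_i(x_{\pi(i)},w_{i\pi(i)})$.
\item Hence $x_i(t)\le \max\{x_i(0),\ \sup_{s\le t} f_i(x_{\pi(i)}(s),w_{i\pi(i)})\}$, because $x_i$ can only increase while below this value.
\item Induct on hop distance, starting from the constant source states. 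If $x_{\pi(i)}$ is bounded above by $B_{\pi(i)}$ on $I$, monotonicity of $f_i$ gives $x_i\le \max\{x_i(0), f_i(B_{\pi(i)},w_{i\pi(i)})\}=:B_i$.
\end{itemize}
This yields a finite bound on every node over the whole maximal interval. It is independent of $t_{\max}$ and depends only on the initial states, the weights and the kernels.

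\textbf{Main obstacle.} The delicate step is rigorously justifying the claim that $x_i$ can only increase while it lies below a bounded quantity. This is a comparison argument for a scalar ODE whose right-hand side is nonpositive whenever $x_i\ge B_i$. Suppose $x_i(t_1)>B_i$ for some $t_1$, and take the last time $t_0<t_1$ with $x_i(t_0)=B_i$. On $(t_0,t_1]$ we have $x_i>B_i\ge f_i(x_{\pi(i)},w_{i\pi(i)})\ge \min_j f_i(x_j,w_{ij})$, so $\xh_i>0$ and $\xd_i<0$. By the mean value theorem this contradicts $x_i(t_1)>x_i(t_0)$. This argument needs differentiability of $x_i$, which holds because the solution is classical: the right-hand side of (\ref{eq:singularity}) is continuous.
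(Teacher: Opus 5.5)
Your final argument, the induction on hop distance along an in-tree toward $S$ closed by the first-exit/mean-value-theorem step, is correct. It takes a somewhat different route from the paper.

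The paper inducts over the layers $\cF_\ell$ of Definition~\ref{def:longest}, i.e.\ along true parents, and works with the errors $e_i=x_i-x_i^*$. Once a parent's error is bounded by $B$, it uses the Lipschitz constant to get $\hat{x}_i(t)\ge e_i(t)-LB$. It then applies Kamke's comparison principle against the scalar system $\dot{\bar e}=-\eta\big(\beta_1(\bar e-LB)^{p_1/p_2}+\beta_2(\bar e-LB)^{m_1/m_2}\big)$, together with a known boundedness lemma for that fixed-time system.

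You instead take any neighbor $\pi(i)$ one hop closer to $S$, with no reference to $x^*$ or to true parents. You use only monotonicity of $f_i$ to get $\min_{j\in\cN(i)} f_i(x_j,w_{ij})\le f_i(B_{\pi(i)},w_{i\pi(i)})$. An elementary barrier argument replaces the comparison theorem and gives the explicit bound $B_i=\max\{x_i(0),f_i(B_{\pi(i)},w_{i\pi(i)})\}$.

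Your route is self-contained and more general. Assumption~\ref{ass:Lip} enters only through continuity of the right-hand side, so solutions are classical $C^1$ and the mean value theorem applies. The only property of the fractional-power feedback you use is that it is strictly negative whenever $\hat{x}_i>0$. The paper's route instead stays uniform with the rest of its analysis. The same induction along true parents drives the proof of Theorem~\ref{the:exponential}, and the same comparison with the fixed-time scalar system recurs in Lemmas~\ref{le:globalexistence1} and~\ref{le:overfinite}.

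When you write it up, delete the exploratory parts. The opening claim that $\bar{x}(t)$ is non-increasing is false in general, as your own analysis shows. The $\Phi$ and box-invariance digressions are abandoned. The proof should consist only of the tree induction and the first-exit argument.
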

To establish a lower bound for the solutions to (\ref{eq:singularity}), we first partition the set of non-source nodes based on the ordering of their optimal values as characterized by (\ref{eq:bellman}).
\begin{definition}\label{def:S}
	Let $\bar{x}_l^*$ denote the $l$-th smallest value in the set $\{x_i^* \}_{i \in V\setminus S}$, with $l \in \{1, 2, \cdots, R\}$. Note that $1 \leq R \leq |V\setminus S|$, and $R < |V\setminus S|$ may hold as multiple non-source nodes can share the same stationary value. Define 
	\begin{equation}\label{eq:S}
		\cS_l = \{i \in V\setminus S| x_i^* = \xb_l^* \}, l \in \{1, 2, \cdots, R\}
	\end{equation}
	as the set of non-source nodes whose stationary value equals $\bar{x}_l^*$. Specifically, define $\cS_0 := S$. Furthermore, let 
	\begin{equation}\label{eq:small}
		x_{\min}^{(l)}(t) = \min_{i \in V \setminus \cup_{j = 0}^{l} \cS_j} x_i(t), l \in \{0,1,\cdots, R-1 \}
	\end{equation}
	be the smallest state among nodes not in $\cup_{j = 0}^{l}\cS_j$. The set of nodes attaining $x_{\min}^{(l)}(t)$ at time $t$ is defined as
	\begin{equation}\label{eq:kt}
		\cK^l(t) = \{ i \in V\setminus \cup_{j = 0}^{l} \cS_j ~|~ x_i(t) = x_{\min}^{(l)}(t)  \}.
	\end{equation}
\end{definition}

The lower boundedness of all solutions to (\ref{eq:singularity}) is established as follows. 
\begin{lemma}\label{le:globalexistence1}
	Suppose Assumptions \ref{ass:main} and \ref{ass:Lip} hold. Then all solutions to (\ref{eq:singularity}) are lower bounded over the maximal interval of existence.
\end{lemma}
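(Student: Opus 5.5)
The plan is to show that the minimum state over all nodes cannot drop below a fixed constant built from the initial data and the stationary values. The argument is a minimum-tracking one and uses only progressiveness and monotonicity of $f_i$.

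Let $\underline{c} = \min\{\min_{i\in V} x_i(0), \min_{i\in S} x_i(0)\}$ and set $x_{\min}(t) = \min_{i\in V} x_i(t)$. Since source states are constant, any node attaining $x_{\min}(t)$ is either a source, whose value is fixed, or a non-source node $i$ in the set $\cK^{0}$-type argmin. The claim to prove is that $x_{\min}(t) \ge \underline{c}$ on the maximal interval of existence.

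Consider a non-source node $i$ at which $x_i(t) = x_{\min}(t)$, and assume $x_i(t) < \min_{k \in S} x_k(0)$, i.e. it lies strictly below every source. Take $j^\star$ to be a minimizing neighbor in (\ref{eq:xhat}). Then $x_{j^\star}(t) \ge x_i(t)$, because $x_i$ is the global minimum. By progressiveness (\ref{eq:progressive}) and monotonicity of $f_i$,
\begin{equation*}
\min_{j\in\cN(i)} f_i(x_j(t),w_{ij}) \ge x_{j^\star}(t) + \zeta \ge x_i(t)+\zeta,
\end{equation*}
so $\xh_i(t) \le -\zeta < 0$. Since $p_1,p_2,m_1,m_2$ are odd, both odd powers of $\xh_i$ are negative, and hence $\xd_i(t) > 0$.

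The step to be careful with is converting this pointwise sign condition into a bound on $x_{\min}$. The function $x_{\min}$ is a minimum of finitely many $C^1$ functions, so it is locally Lipschitz and its upper Dini derivative equals $\min_{i\in\cK(t)}\xd_i(t)$ (Danskin). Suppose $x_{\min}$ ever fell below $\underline c$. Then there is a first time after which it stays below $\underline c$ while decreasing somewhere. At any such time the minimizers are non-source nodes lying below all sources, so by the above every minimizer has positive derivative, which makes the Dini derivative of $x_{\min}$ positive. This contradicts the decrease. The standard comparison lemma for Dini derivatives yields the same conclusion, giving $x_i(t) \ge x_{\min}(t)\ge \underline c$ for all $i$ and all $t$ in the maximal interval.

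Combined with Lemma \ref{le:globalexistence}, this shows the solutions are bounded. Assumption \ref{ass:Lip} is needed only to guarantee local existence, so that the Dini-derivative argument is carried out on a well-defined maximal interval.
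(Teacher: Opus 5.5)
Your proof is correct, but it takes a different route from the paper's.

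\textbf{The paper's route.} The paper tracks only the non-source minimum $x^{(0)}_{\min}(t)$. For a minimizer $i$, it splits on whether the minimizing neighbor $l$ is a source.
\begin{itemize}
\item If $l\in S$, it uses (\ref{eq:bellman}) to get $\xh_i(t)\le e_i(t)\le x^{(0)}_{\min}(t)-\xb_1^*$.
\item If $l\notin S$, it gets $\xh_i(t)\le-\zeta$.
\end{itemize}
This yields the sharper bound $x_i(t)\ge\min\{\xb_1^*,x^{(0)}_{\min}(0)\}$. That bound is recorded in Lemma \ref{le:globalbound} and used in (\ref{eq:Tf}). The same case split is reused for $x^{(l)}_{\min}$ in Lemma \ref{le:overfinite}.

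\textbf{Your route.} You include the sources in the minimum, which makes the comparison $x_{j^\star}(t)\ge x_i(t)$ automatic for every neighbor. As a result you need only progressiveness (\ref{eq:progressive}); neither the Bellman equation nor monotonicity of $f_i$ enters. Your Dini-derivative treatment of the pointwise minimum is also more careful than the paper's appeal to Clarke's generalized derivative with convex weights.

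Your restriction to minimizers lying strictly below every source is in fact unnecessary. Any non-source global minimizer has $\xd_i(t)\ge\eta(\beta_1\zeta^{p_1/p_2}+\beta_2\zeta^{m_1/m_2})>0$, and sources have zero derivative. So $\min_{i\in V}x_i(t)$ is simply nondecreasing.

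\textbf{The trade-off.} Your constant is coarser. It is $\min_{i\in V}x_i(0)=\min\{\min_{k\in S}x_k(0),x^{(0)}_{\min}(0)\}$, and $\min_{k\in S}x_k(0)<\xb_1^*$ by progressiveness along optimal paths. This is enough for the lemma as stated, which asks only for lower boundedness. Plugging your bound into Lemma \ref{le:overfinite} would merely enlarge the estimate of $T_f$. However, your argument does not by itself deliver the specific bound asserted in Lemma \ref{le:globalbound}.
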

Then we have the following appealing result (\cite[Proposition 2.1]{bhat2000finite}. 
\begin{lemma}\label{le:globalbound}
	Suppose Assumptions \ref{ass:main} and \ref{ass:Lip} hold. The solutions to (\ref{eq:singularity}) are bounded and defined over $[0, +\infty)$. In particular, for all $i \in V\setminus S$, $x_i(t) \geq \min\{\xb_1^\ast, x^{(0)}_{\min}(0)\}$ for all $t \geq 0$, with $\xb_1^\ast$ and $x^{(0)}_{\min}(0)$ defined in Definition \ref{def:S}.
\end{lemma}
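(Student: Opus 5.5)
Combining Lemma~\ref{le:globalexistence} and Lemma~\ref{le:globalexistence1}, every solution of (\ref{eq:singularity}) is bounded above and below on its maximal interval of existence $[0,t_{\max})$. I will then invoke the standard continuation result \cite[Proposition 2.1]{bhat2000finite}. The right-hand side of (\ref{eq:singularity}) is continuous in $x$: $\xh_i$ is continuous because each $f_i$ is Lipschitz in its first argument (Assumption~\ref{ass:Lip}) and the minimum of finitely many continuous functions is continuous, and odd-fraction powers are continuous on $\mathbb{R}$. Hence solutions exist, and any solution that stays in a compact set can be continued. A bounded solution therefore cannot have finite escape time, so $t_{\max}=+\infty$.

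For the explicit lower bound, I will argue by contradiction on the running minimum $x^{(0)}_{\min}(t)$ over non-source nodes. Let $c=\min\{\xb_1^\ast,x^{(0)}_{\min}(0)\}$ and suppose some non-source state drops below $c$. Choose the first time $t_1$ after which $x^{(0)}_{\min}$ falls below $c$. Since $x^{(0)}_{\min}$ is Lipschitz (a minimum of finitely many $C^1$ functions), one can pick $t_2>t_1$ arbitrarily close to $t_1$ with $x^{(0)}_{\min}(t_2)<c$, together with a node $i\in\cK^0(t)$ on a set of positive measure where $\dot x_i<0$.

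The key step is to show that for $i\in\cK^0(t)$ with $x_i(t)\le \xb_1^\ast$ we have $\xh_i(t)\le 0$, so that $\xd_i(t)\ge 0$. Every neighbor $j$ of $i$ is either a source, with $x_j=x_j^\ast$, or a non-source, with $x_j(t)\ge x_i(t)$. In the first case, for a source neighbor $j$, (\ref{eq:bellman}) gives $f_i(x_j^\ast,w_{ij})\ge x_i^\ast\ge \xb_1^\ast\ge x_i(t)$. In the second case, progressiveness (\ref{eq:progressive}) gives $f_i(x_j,w_{ij})\ge x_j+\zeta> x_i(t)$. In both cases $\min_j f_i(x_j,w_{ij})\ge x_i(t)$, so $\xh_i\le 0$. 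Because odd powers preserve sign, $\xd_i\ge0$.

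Thus the minimum cannot decrease while it lies at or below $\xb_1^\ast$. I will formalize this with the upper Dini derivative of $x^{(0)}_{\min}$, which equals $\min_{i\in\cK^0(t)}\xd_i(t)\ge 0$ whenever $x^{(0)}_{\min}(t)\le \xb_1^\ast$. That yields $x^{(0)}_{\min}(t)\ge c$ for all $t$, which contradicts the assumed drop. I expect the main care to be needed in this Dini-derivative/first-exit argument with a switching active set $\cK^0(t)$, though it is routine. This sign argument is presumably the same one underlying Lemma~\ref{le:globalexistence1}, so I may simply cite that proof.
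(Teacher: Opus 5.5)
Your proposal is correct and follows the paper's route. You take boundedness on the maximal interval from Lemmas~\ref{le:globalexistence} and~\ref{le:globalexistence1}, get global existence via \cite[Proposition 2.1]{bhat2000finite}, and obtain the explicit bound $\min\{\xb_1^\ast, x^{(0)}_{\min}(0)\}$ from the running-minimum argument in the proof of Lemma~\ref{le:globalexistence1}. Your sign-only version is, if anything, cleaner than the paper's Clarke-derivative comparison: you show $\xh_i\le 0$ for $i\in\cK^0(t)$ whenever $x^{(0)}_{\min}(t)\le\xb_1^\ast$ and use the right Dini derivative. The paper's comparison additionally extracts the rate $\Delta$, which it reuses in Lemma~\ref{le:overfinite}.
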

The following lemma shows that non-source nodes converge to their optima in finite time, provided that all nodes with smaller optima have already converged.
\begin{lemma}\label{le:overfinite}
	Suppose Assumptions \ref{ass:main} and \ref{ass:Lip} hold, with $\xb_l^*, \cS_l, x_{\min}^{(l)}(t)$, $\cK^l(t)$ and $l \in \{1,\cdots, R\}$ defined in Definition \ref{def:S}. If for all $i \in \cup_{j \in \{0, 1, \cdots, l \}}\cS_i$ with $0 \leq l \leq R - 1$, we have $x_i(t) = x_i^*$ for $t \geq t_1$, then for all $i \in \cS_{l+1}$,
	\begin{equation}
		x_i(t) = x_i^*, ~
		\forall t \geq t_1 + 2T_* + \max\{T_f,0\},
	\end{equation}
	where $T_*$ and $T_f$ satisfy
	\begin{equation}\label{eq:fixtime}
	T_* = 	 \frac{1}{\eta\beta_1}\frac{m_2}{m_1 - m_2} + \frac{1}{\eta\beta_2}\frac{p_1}{p_2 - p_1},
	\end{equation}
	\begin{equation}\label{eq:Tf}
		T_f \leq  \frac{\xb_R^* - \zeta - \min\{\xb_1^*, x^{(0)}_{\min}(0)\}}{\eta\left (\beta_1\zeta^{\frac{p_1}{p_2}} + \beta_2\zeta^{\frac{m_1}{m_2}} \right )},
	\end{equation}
	with $\zeta$ defined in (\ref{eq:progressive}), $\beta_1,\beta_2,p_1,p_2,m_1,m_2$ and $\eta$ given in (\ref{eq:singularity}).
\end{lemma}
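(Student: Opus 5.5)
The plan is a two-phase argument on the states of the not-yet-converged nodes, $V\setminus\cup_{j=0}^{l}\cS_j$, for $t\ge t_1$. In the first phase the minimum $x_{\min}^{(l)}(t)$ of these states is driven up to at least $\xb_{l+1}^*$. In the second phase, once this holds, each $i\in\cS_{l+1}$ evolves according to a decoupled scalar fixed-time ODE. The time budget is $\max\{T_f,0\}+T_*$ for the first phase and $T_*$ for the second.

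\textbf{Key inequality.} Fix $t\ge t_1$ and a node $k\in\cK^l(t)$. By definition of the sets $\cS_j$ we have $x_k^*\ge \xb_{l+1}^*$. Take any $j\in\cN(k)$ and split into two cases.
\begin{itemize}
\item If $j$ lies in the converged set $\cup_{q\le l}\cS_q$, then $x_j(t)=x_j^*$. By the Bellman equation (\ref{eq:bellman}), $f_k(x_j^*,w_{kj})\ge x_k^*\ge \xb_{l+1}^*$.
\item Otherwise $x_j(t)\ge x_k(t)$. Monotonicity and progressiveness (\ref{eq:progressive}) then give $f_k(x_j,w_{kj})\ge x_k(t)+\zeta$.
\end{itemize}
Hence $\xh_k(t)\le x_k(t)-\min\{\xb_{l+1}^*,\,x_k(t)+\zeta\}$. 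Because $p_1,p_2,m_1,m_2$ are odd, the map $s\mapsto \beta_1 s^{p_1/p_2}+\beta_2 s^{m_1/m_2}$ is odd and increasing. This yields a lower bound on the upper Dini derivative of $x_{\min}^{(l)}$, which is a minimum of finitely many $C^1$ functions and whose derivative is attained on $\cK^l(t)$.

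\textbf{Phase 1.} While $x_{\min}^{(l)}\le \xb_{l+1}^*-\zeta$, we have $\xh_k\le-\zeta$, so $D^+x_{\min}^{(l)}\ge \eta(\beta_1\zeta^{p_1/p_2}+\beta_2\zeta^{m_1/m_2})$. By Lemma \ref{le:globalbound}, $x_{\min}^{(l)}(t_1)\ge\min\{\xb_1^*,x_{\min}^{(0)}(0)\}$, and $\xb_{l+1}^*\le\xb_R^*$. Therefore the level $\xb_{l+1}^*-\zeta$ is reached within $\max\{T_f,0\}$ with $T_f$ as in (\ref{eq:Tf}). Afterwards, set $d(t)=\xb_{l+1}^*-x_{\min}^{(l)}(t)\in(0,\zeta]$. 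Then $\xh_k\le -d$, so $D^+d\le-\eta(\beta_1 d^{p_1/p_2}+\beta_2 d^{m_1/m_2})$. The comparison lemma together with the standard fixed-time bound (integrate separately over $d\ge1$ and $d\le1$) gives $d=0$ within $T_*$. The same inequality shows $\xh_k\le 0$ once $x_{\min}^{(l)}\ge\xb_{l+1}^*$. So the minimum cannot drop below this level again, and from then on every unconverged state is at least $\xb_{l+1}^*$.

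\textbf{Phase 2.} Take $i\in\cS_{l+1}$. Every true parent $j\in\cP(i)$ satisfies $x_j^*\le x_i^*-\zeta<\xb_{l+1}^*$ by progressiveness, so $j$ is already converged and contributes exactly $f_i(x_j^*,w_{ij})=x_i^*$. Any other converged neighbor contributes at least $x_i^*$ by (\ref{eq:bellman}). Any unconverged neighbor $j$ has $x_j\ge\xb_{l+1}^*=x_i^*$, hence contributes at least $x_i^*+\zeta$. Thus $\min_{j\in\cN(i)} f_i(x_j,w_{ij})=x_i^*$ and $\xh_i=e_i$. This gives the scalar ODE $\dot e_i=-\eta(\beta_1e_i^{p_1/p_2}+\beta_2e_i^{m_1/m_2})$, which sends $e_i$ to $0$ within $T_*$ from any sign of $e_i$ and keeps it there. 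Summing the phase times gives $t_1+2T_*+\max\{T_f,0\}$.

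\textbf{Main obstacle.} The delicate step is the Dini-derivative bookkeeping for $x_{\min}^{(l)}$, whose minimizing set $\cK^l(t)$ switches over time, together with justifying the invariance after the level $\xb_{l+1}^*$ is reached. I would handle this with the standard result that the Dini derivative of a pointwise minimum of $C^1$ functions equals the minimum of the derivatives over the active set, applied to the inequality for $\xh_k$ above.
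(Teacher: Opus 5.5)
Your proposal follows the paper's proof essentially step for step. It uses the same two phases and the same converged/unconverged case split on the minimizing neighbor, leading to $\dot x_{\min}^{(l)}\ge\min\{\Delta,-\eta(\beta_1\breve e^{p_1/p_2}+\beta_2\breve e^{m_1/m_2})\}$. It bounds the linear-rate stage by $T_f$ via Lemma~\ref{le:globalbound}, and it observes, as the paper does, that every true parent of $i\in\cS_{l+1}$ has already converged, so $\xh_i=e_i$. The differences are cosmetic: you use right Dini derivatives instead of Clarke's generalized derivative, and you give an explicit invariance argument for $x_{\min}^{(l)}\ge\xb_{l+1}^*$ that the paper leaves implicit. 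On that point, your inequality gives $\xh_k\le 0$ only when $x_{\min}^{(l)}=\xb_{l+1}^*$, not whenever $x_{\min}^{(l)}\ge\xb_{l+1}^*$; the boundary case is all the invariance needs.

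One caveat, which you share with the paper (it simply cites \cite[Lemma 4.1]{zuo2016distributed}), concerns the constant $T_*$. If you actually carry out the integration you describe, over $d\ge1$ and $d\le1$, you get the settling-time bound $\frac{1}{\eta\beta_2}\frac{m_2}{m_1-m_2}+\frac{1}{\eta\beta_1}\frac{p_2}{p_2-p_1}$. This is not (\ref{eq:fixtime}): there $\beta_1,\beta_2$ are interchanged and $p_1$ replaces $p_2$ in the numerator.

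The stated $T_*$ is not a valid bound in general. Take $\eta=\beta_1=1$, $p_1/p_2=1/3$, $\beta_2=100$, $m_1/m_2=101$; then (\ref{eq:fixtime}) gives $2T_*=0.03$. Now consider a single non-source node whose only neighbor is a source, with $f_i(a,b)=a+b$ and $e_i(0)=1$, so that $\xh_i=e_i$ and $\max\{T_f,0\}=0$. The equation $\dot e_i=-(e_i^{1/3}+100e_i^{101})$ then needs more than $1.3$ time units to reach zero. So your ``within $T_*$'' steps, and the lemma itself, go through only with the corrected constant.
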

As all source nodes' states are anchored at their stationary values, the following theorem shows that (\ref{eq:singularity}) will converge after a finite time, thereby establishing the global finite time stability of (\ref{eq:singularity}).
\begin{theorem}\label{the:fix}
	Suppose Assumptions \ref{ass:main} and \ref{ass:Lip} hold. With $x_i^*, T_f$ and $T_*$ defined in (\ref{eq:bellman}) and Lemma \ref{le:overfinite}, for all $i \in V$,
	\begin{equation}
		x_i(t) = x_i^*, ~ \forall t \geq 2|V\setminus S|T_* + |V\setminus S|\max\{0, T_f\}.
	\end{equation}
\end{theorem}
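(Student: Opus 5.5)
The plan is to prove Theorem~\ref{the:fix} by induction over the level sets $\cS_0,\cS_1,\dots,\cS_R$ of Definition~\ref{def:S}, applying Lemma~\ref{le:overfinite} once per level. Write $\tau := 2T_* + \max\{T_f,0\}$ for the per-level settling time. Lemma~\ref{le:globalbound} guarantees that solutions of (\ref{eq:singularity}) exist and are bounded on $[0,+\infty)$, so all finite time instants that appear below are meaningful.

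\emph{Base case.} Every source node satisfies $\xd_i(t)=0$, so $x_i(t)=x_i(0)=x_i^*$ for all $t\ge 0$ and all $i\in\cS_0=S$. The hypothesis of Lemma~\ref{le:overfinite} therefore holds with $l=0$ and $t_1=0$.

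\emph{Inductive step.} Suppose that for some $0\le l\le R-1$ every $i\in\cup_{j=0}^{l}\cS_j$ satisfies $x_i(t)=x_i^*$ for all $t\ge l\tau$. Lemma~\ref{le:overfinite} with $t_1=l\tau$ gives $x_i(t)=x_i^*$ for all $i\in\cS_{l+1}$ and all $t\ge (l+1)\tau$. Together with the induction hypothesis, this shows that all of $\cup_{j=0}^{l+1}\cS_j$ is settled from time $(l+1)\tau$ onward. After $R$ steps, every node in $V=\cup_{j=0}^{R}\cS_j$ satisfies $x_i(t)=x_i^*$ for $t\ge R\tau$.

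\emph{Closing the bound.} Since $R\le|V\setminus S|$ and $\tau\ge 0$, we have $R\tau\le |V\setminus S|\bigl(2T_*+\max\{0,T_f\}\bigr)$, which is exactly the time stated in the theorem.

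The one point to check is that the constants in Lemma~\ref{le:overfinite} do not depend on $l$ or on the state at time $t_1$. This is what allows a single uniform $\tau$ to be reused at every level. $T_*$ in (\ref{eq:fixtime}) depends only on the controller gains. The bound (\ref{eq:Tf}) on $T_f$ is expressed through $\xb_R^*$, $\zeta$, and $\min\{\xb_1^*,x^{(0)}_{\min}(0)\}$. These quantities are global, and by Lemma~\ref{le:globalbound} the last one lower-bounds every trajectory for all $t\ge 0$, not just at $t_1$. I take the lemma as stated to mean that this $\tau$ works for every admissible $t_1$.
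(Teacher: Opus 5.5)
Your proposal is correct and follows essentially the same route as the paper: induction over the level sets $\cS_0,\dots,\cS_R$, with Lemma~\ref{le:overfinite} applied once per level using the uniform settling time $2T_*+\max\{0,T_f\}$, and the bound $R\le|V\setminus S|$ to finish. Your explicit check that $T_*$ and the bound (\ref{eq:Tf}) on $T_f$ do not depend on $l$ or on $t_1$ makes precise a uniformity that the paper's ``repeating the above procedure'' uses without stating it.
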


\subsection{The fully distributed singularity-free prescribed-time control strategy}\label{sec:free}
This subsection establishes the fully distributed singularity-free prescribed-time stabilization of GABF, leveraging the fixed time control strategy derived above. Throughout this subsection, $e_i(t)$ is defined as the difference between $x_i^*$ and the state under the fully distributed singularity-free prescribed-time control strategy:
\begin{flalign}\label{eq:fully}
	&\!\!\!\!	\xd_i(t) =  \nonumber \\
	&\!\!\!\!	\begin{cases}
		0, \!\!\!\!& i \in S,\\
		-\bar{\eta}(t) \left ( \beta_1\xh_i(t)^{\frac{p_1}{p_2}} + \beta_2\xh_i(t)^{\frac{m_1}{m_2}} \right ), \!\!\!\!& i \notin S~ \&~ \xh_i(t) \neq 0, \\
		-\eta ( \beta_1\xh_i(t)^{\frac{p_1}{p_2}} + \beta_2\xh_i(t)^{\frac{m_1}{m_2}} ), \!\!\!\!& i \notin S ~\&~ \xh_i(t) = 0,
	\end{cases}
\end{flalign}
where $\bar{\eta}(t)$ is defined in (\ref{eq:ebar}), $\eta, \beta_1, \beta_2 > 0$,  $m_1, m_2, p_1$ and $p_2$ are all positive odd numbers with $m_1 > m_2$ and $p_1 < p_2$, and $\xh_i(t)$ is given by (\ref{eq:xhat}).

The main result of this subsection is summarized as follows, which ensures convergence to the stationary value while preventing the gain from diverging at $t = T_p$ in the second bullet of (\ref{eq:fully}).
\begin{theorem}\label{the:singularity}
	Suppose Assumptions \ref{ass:main} and \ref{ass:Lip} hold, consider (\ref{eq:fully}), with $x_i^*$ defined in (\ref{eq:bellman}), for all $i \in V$, 
	\begin{equation}\label{eq:T*P}
		x_i(t) = x^*_i,~ \forall t \geq T^*_p,
	\end{equation}
	where $T^*_p < T_p$, with $T_p$ the prescribed time in (\ref{eq:rho}).
\end{theorem}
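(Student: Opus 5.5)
The approach is to treat (\ref{eq:fully}) as a time-rescaled copy of the fixed time scheme (\ref{eq:singularity}), and to rerun the layer-by-layer argument of Lemma~\ref{le:overfinite} and Theorem~\ref{the:fix} with the gain $\eta$ replaced by $\bar{\eta}(t)$. The third branch of (\ref{eq:fully}) is only active when $\xh_i(t)=0$, where the right-hand side vanishes anyway. Hence on $[0,T_p)$ every non-source node effectively obeys
$\xd_i=-\bar\eta(t)\big(\beta_1\xh_i^{p_1/p_2}+\beta_2\xh_i^{m_1/m_2}\big)$ with $\bar\eta(t)=\gamma+2\dot\rho/\rho=\gamma+\frac{2(1+h)}{T_p-t}$.

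On $[0,T_p)$, $\bar\eta$ is continuous and positive, so I introduce the new time $\tau(t)=\int_0^t\bar\eta(s)\,ds=\gamma t-2(1+h)\ln\frac{T_p-t}{T_p}$. This map is a strictly increasing bijection from $[0,T_p)$ onto $[0,+\infty)$. Setting $y_i(\tau)=x_i(t(\tau))$ gives exactly (\ref{eq:singularity}) with $\eta=1$. By Lemma~\ref{le:globalbound}, $y$ exists, is bounded, and is unique on $[0,\infty)$. Pulling back, the solution of (\ref{eq:fully}) exists on $[0,T_p)$ with the same state bounds, in particular $x_i(t)\ge\min\{\xb_1^*,x^{(0)}_{\min}(0)\}$. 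Theorem~\ref{the:fix} with $\eta=1$ then gives $y_i(\tau)=x_i^*$ for all $\tau\ge\tau^*:=2|V\setminus S|T_*+|V\setminus S|\max\{0,T_f\}$, where $T_*$ and $T_f$ in (\ref{eq:fixtime})--(\ref{eq:Tf}) are evaluated at $\eta=1$. Since $\tau^*<\infty$ and $\tau$ blows up as $t\to T_p^-$, the time $T_p^*:=\tau^{-1}(\tau^*)$ is strictly less than $T_p$. This is the claimed convergence time.

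It remains to see that the states stay at $x_i^*$ for $t\ge T_p^*$. At the equilibrium, (\ref{eq:bellman}) gives $\xh_i=0$ for every non-source node, so after $T_p^*$ the third branch of (\ref{eq:fully}) applies with zero right-hand side. Thus $x_i(t)=x_i^*$ on $[T_p^*,T_p)$ and, by continuity and the same reasoning, also at and beyond $T_p$. In particular the singular gain $\bar\eta$ is never multiplied by a nonzero quantity near $T_p$, which is the singularity-free claim.

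The main obstacle is that the right-hand side of (\ref{eq:singularity}) is non-Lipschitz because $p_1/p_2<1$, and the switching in (\ref{eq:fully}) is discontinuous in form. The time change must therefore be justified at the level of absolutely continuous solutions rather than by a classical chain rule under Lipschitz uniqueness. I would handle this by noting that $t\mapsto\tau$ is a $C^1$ diffeomorphism on $[0,T_p)$, so absolutely continuous solutions correspond one-to-one. Uniqueness is not actually needed: Lemma~\ref{le:overfinite} and Theorem~\ref{the:fix} hold for any solution, so the bound $T_p^*$ applies to every solution of (\ref{eq:fully}).
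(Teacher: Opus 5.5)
Your proposal is correct and is essentially the paper's proof. The paper uses the same time change $\tau(t)=\gamma t-2(1+h)\ln\frac{T_p-t}{T_p}$ to turn (\ref{eq:fully}) into (\ref{eq:singularity}) with $\eta=1$, invokes Theorem~\ref{the:fix} to get $y_i(\tau)=x_i^*$ beyond a finite $\tau^*$, and sets $T_p^*=\tau^{-1}(\tau^*)<T_p$, after which $\xh_i=0$ keeps every state at $x_i^*$. Your handling of the $\xh_i=0$ branch and of non-uniqueness is more careful than the paper's; just drop the claim that Lemma~\ref{le:globalbound} gives uniqueness, since it does not, and as you note yourself it is not needed.
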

\begin{remark}\label{re:implementation}
(\ref{eq:fully}) and (\ref{eq:prespecified}) can be discretized and implemented on digital hardware using standard ODE solvers. Due to the non-autonomous and stiff nature of the system induced by \(\bar{\eta}(t)\) defined in (\ref{eq:ebar}), implicit numerical integration methods, such as the backward Euler method, are generally preferred. Although explicit methods may fail near the prescribed time \(T_p\) even with very small time steps, explicit solvers such as MATLAB's \texttt{ode45} remain practical, as convergence typically occurs well before \(T_p\).
\end{remark}

\section{Simulations}\label{sec:simulations}
In this section, we empirically verify the theoretical results presented in the previous section and demonstrate the practicality of GABF, through simulations based on real-world application scenarios.
\subsection{Comparison of the prescribed-time control strategies}\label{sec:comparison}
\begin{figure}
	\centering
	\subfigure[prescribed-time control strategy]{  
		\includegraphics[width = 0.9\columnwidth]{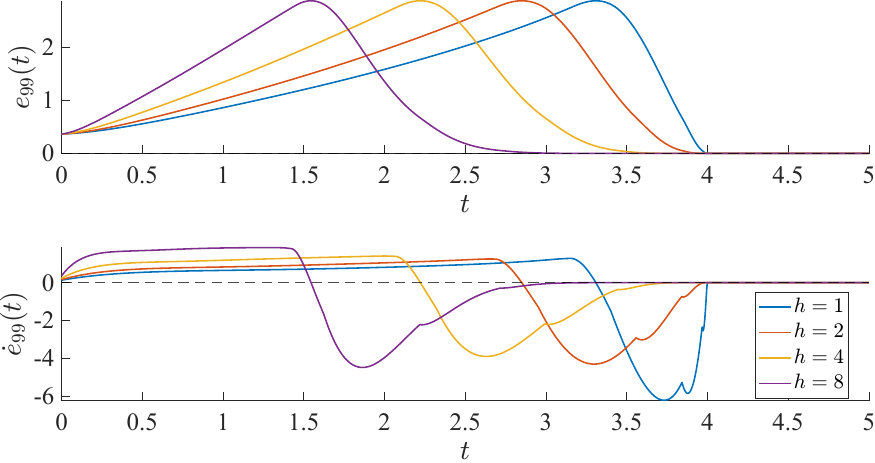}}
	\subfigure[fully distributed singularity-free prescribed-time control strategy]{  
		\includegraphics[width = 0.9\columnwidth]{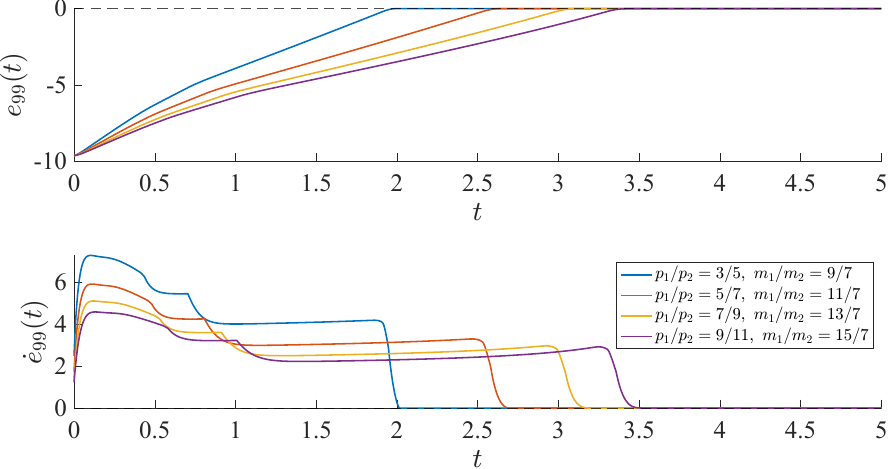}}
	\caption{Performance of different prescribed-time control strategies.}
	\label{fig:comparison}
\end{figure}
We consider a directed graph with 500 randomly distributed nodes (499 non-source nodes and 1 source node) in a $10 \times 10 ~\mathrm{km}^2$ area. Each node has on average 10 out-neighbors, and each non-source node has at least one directed path to the source. Let $f_i(a, b) = a + b$ for $i \in V\setminus S$, where GABF reduces to DBMC. The Euclidean distance between two nodes is used as their edge weight. 

The prescribed-time control strategy (\ref{eq:prespecified}) is first applied to GABF. We set $\gamma=1$, $T_p=4$, and vary $h \in \{1,2,4,8\}$. The initial conditions are $x_i(0)=0$ for $i \in S$ and $x_i(0)=10$ for all $i \in V \setminus S$, yielding overestimated initial states. Figure \ref{fig:comparison}(a) shows the state error and control input of node 99. It is observed that all state errors converge within the prescribed time $T_p=4$, while all the control inputs remain bounded. Moreover, larger $h$ leads to faster convergence as the control effort increases.

Then the fully distributed singularity-free prescribed-time control strategy (\ref{eq:fully}) is applied to GABF. We set $\beta_1=\beta_2=1$, $h=1$, $\gamma=10$, $T_p=4$, $\eta=15$, and vary $\frac{p_1}{p_2} \in \{\frac{3}{5}, \frac{5}{7}, \frac{7}{9}, \frac{9}{11}\}$ and $\frac{m_1}{m_2} \in \{\frac{9}{7}, \frac{11}{7}, \frac{13}{7}, \frac{15}{7}\}$. The initial states cover both underestimates and overestimates. As shown in Figure \ref{fig:comparison}(b), the state error converges before the prescribed time $T_p=4$, while all the control inputs remain bounded. Moreover, smaller $\frac{p_1}{p_2}$ and $\frac{m_1}{m_2}$ yield faster convergence at the cost of increased control effort.

\subsection{Application in robotic manipulator path planning}
In this subsection, we explore the application of GABF to path planning for industrial robotic manipulators, where both distance cost and manipulability cost are taken into account \cite{shen2023adaptive}. In this context, $f_i(x_j(t), w_{ij})$ in (\ref{eq:protocol}) is defined as
\begin{equation}\label{eq:finew}
	f_i(x_j(t), w_{ij}) = x_j(t) + (1 - \beta)\omega(j) + \beta w_{ij},
\end{equation}
where $\omega(j)$ represents the manipulability cost depending on the configuration at the position of node $j$, $w_{ij}$ represents the Euclidean distance between nodes $i$ and $j$, $x_j(t)$ represents the estimated optimal cost in terms of both the path length and manipulability measure, and $\beta \in [0, 1]$ sets a tradeoff between the distance cost and the manipulability cost. It can be verified that the kernel function $f_i: \mathbb{R} \times \mathbb{R}_{> 0} \rightarrow \mathbb{R}$ described above satisfies properties of progressiveness and strict monotonicity as required. In particular, the manipulability cost $\omega(j)$ 
is pre-determined using RRT algorithm to avoid obstacles \cite{kuffner2000rrt}.
\begin{figure}
	\centering
	\subfigure[path planning using GABF and RRT]{  
		\includegraphics[width = 0.7\columnwidth]{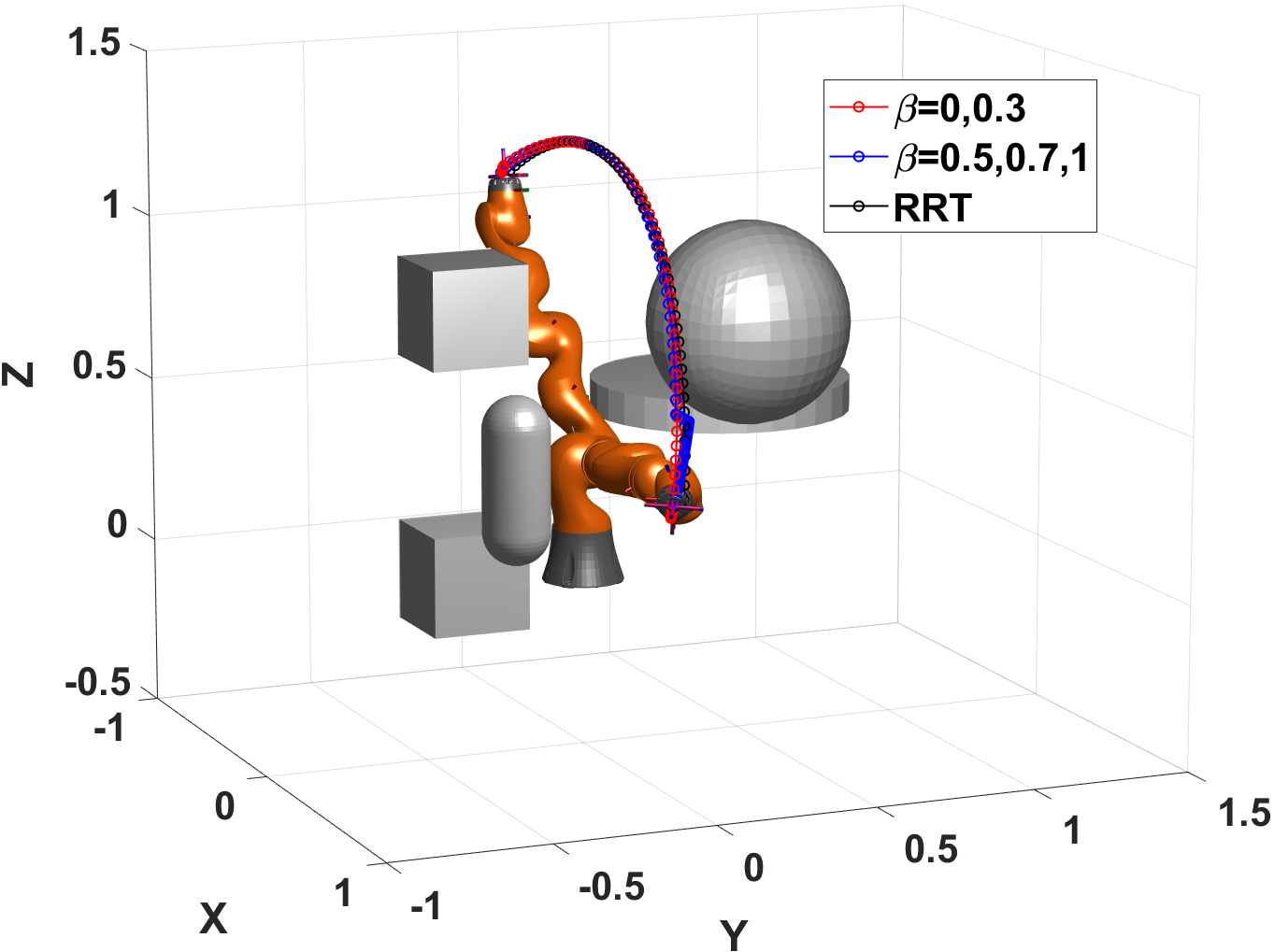}}
	\subfigure[enlarged view of the trajectories]{  
		\includegraphics[width = 0.7\columnwidth]{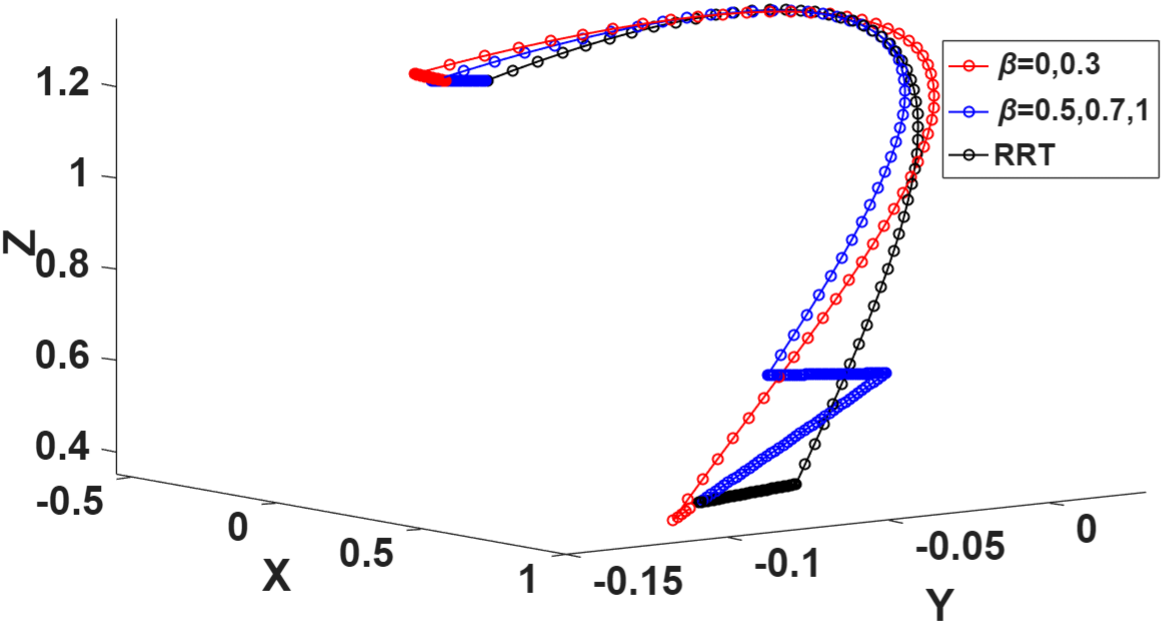}}
	\caption{Comparison of path planning for a robotic manipulator using RRT and GABF.}
	\label{fig:robotic}
\end{figure}
Figure \ref{fig:robotic}(a) demonstrates the results of applying GABF to a KUKA LBR iiwa 14 R820 7-axis manipulator in an environment populated with several obstacles, using the fully distributed singularity-free prescribed-time control strategy (\ref{eq:fully}), with parameters consistent with those in Section \ref{sec:comparison}. Note that, to ensure numerical consistency and comparability between the distance and manipulability costs, both cost terms are normalized prior to aggregation. 

We generate a network consisting of 1012 nodes and 2029 edges, along with the corresponding configuration at each node using the RRT algorithm. GABF is then applied to plan a motion trajectory from an initial location at $(0.92, -0.10, 0.42)$ to a destination at $(-0.52, -0.04, 1.13)$, with $\beta$ in (\ref{eq:finew}) increasing from 0 to 1. It can be observed from Figure \ref{fig:robotic}(a) that GABF produces the same path for both $\beta = 0$ and $\beta = 0.3$, as well as a common path for $\beta = 0.5, 0.7$ and 1. Furthermore, from Figure \ref{fig:robotic}(b), most segments of the three trajectories differ from each other, and the trajectory generated with $\beta = 0.5, 0.7,$ or $1$ is closer to that produced by the RRT implementation, since in this case the distance cost gradually outweighs the manipulability cost, while the RRT algorithm only considers the distance cost. \textcolor{red}{From (\ref{eq:finew}), the red path obtained with $\beta=0$ or $0.3$ prioritizes manipulability, whereas the blue path obtained with $\beta=0.5$, $0.7$, or $1$ emphasizes path length. Consequently, the red path achieves a significantly lower accumulated manipulability cost (0.0331 versus 1.1038), at the expense of a slightly larger accumulated distance cost (47.4037 versus 46.2546). As can be observed from Figure \ref{fig:robotic}(b), although the red path does not appear significantly longer than the blue path, it exhibits larger curvature in the three-dimensional workspace, resulting in a longer actual travel distance. It is worth noting that the blue path exhibits a zigzag pattern. This is mainly due to the non-uniform sampling property of the RRT algorithm, which results in an irregular distribution of nodes and edges in the generated graph. These numerical results are fully consistent with the corresponding objective functions: for $\beta=0$ or $0.3$, the total cost of the red path is lower than that of the blue path, whereas the opposite holds for $\beta=0.5$, $0.7$, and $1$.}

\subsection{Learning-based path planning}
In the following, we investigate the performance of GABF using (\ref{eq:fully}) in a learning-based path finding scenario, as described in \cite{kulvicius2023combining}. In this case, $\xh_i(t)$ in (\ref{eq:fully}) becomes $\xh_i(t) = x_i(t) - \min_{j \in \cN(i)}\{1 -(1-x_j(t))*\bar{w}_{ij} \}$, with $\bar{w}_{ij} = 1 - \frac{w_{ij}}{K}$. It has been shown in \cite{kulvicius2023combining} that the resulting path under (\ref{eq:fully}), using $\xh_i(t)$ above, remains the shortest path for node $i$ when $K$ is sufficiently large, and the newly defined function $f_i: \mathbb{R}_{[0,1)} \times \mathbb{R}_{> 0} \rightarrow \mathbb{R}_{(0,1)}$ given by 
\begin{equation}\label{eq:wb}
	f_i(x_j(t), w_{ij}) = 1 -(1-x_j(t))\bar{w}_{ij}
\end{equation}
satisfies the progressive and monotonicity properties. The merit of utilizing the updated $f_i$ lies in its adaptability: in scenarios where edge weights are unknown or vary over time, $x_i(t)$ computed using (\ref{eq:wb}), rather than using $f_i(x_j(t), w_{ij}) = x_j(t) + w_{ij}$ as in DBMC, can be directly incorporated into the learning process of $\bar{w}_{ij}$, and the resulting $\bar{w}_{ij}$ can, in turn, facilitate the computation of the shortest path. In this way, the repeated transformation between path finding costs and network learning activations can be avoided. 

We adopt the Hebbian-type synaptic
learning rule \cite{kusmierz2017learning} to modify $\bar{w}_{ij}$ in (\ref{eq:wb}). Specifically, during each training iteration, the change in each $\bar{w}_{ij}$ follows the rule: $\Delta \bar{w}_{ij} = \alpha (1 - x_i(t))(1 - x_j(t)) R$,
where $\alpha > 0$ is the learning rate, the reward signal $R$ is defined as $R = 1 - \beta d_{ij}$ with $\beta > 0$ and $d_{ij}$ representing the information regarding the Euclidean distance between nodes $i$ and $j$, i.e., smaller distances between locations lead to larger reward values and larger weight updates. While \cite{kulvicius2023combining} employs a modified version of the classic Bellman-Ford algorithm for the shortest path computation, which requires prior knowledge of the graph, the prescribed-time stability achieved by the fully distributed singularity-free prescribed-time control strategy ensures convergence within a user-defined time, in a knowledge-free sense. Furthermore, the distributed nature of GABF enables simultaneous training of each $\bar{w}_{ij}$, in contrast to the sequential training approach used in \cite{kulvicius2023combining}.
\begin{figure}
	\centering
	\subfigure[path planning after training iteration 1]{  
		\includegraphics[width = 0.7\columnwidth]{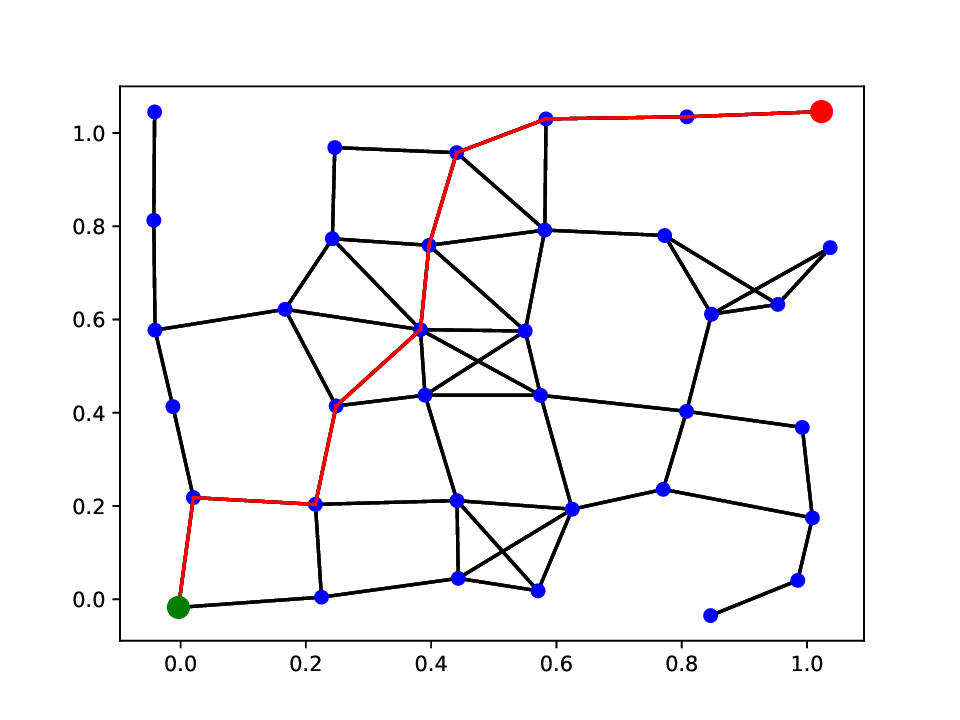}}
	\subfigure[path planning after training iteration 3]{  
		\includegraphics[width = 0.7\columnwidth]{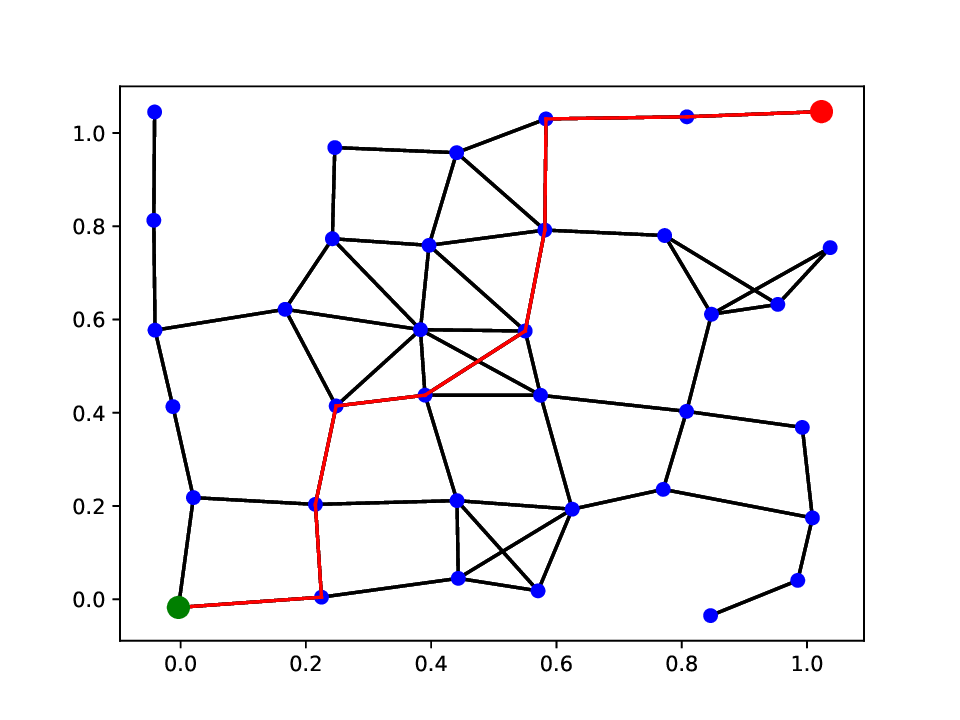}}
	\subfigure[path planning after training iteration 6]{  
		\includegraphics[width = 0.7\columnwidth]{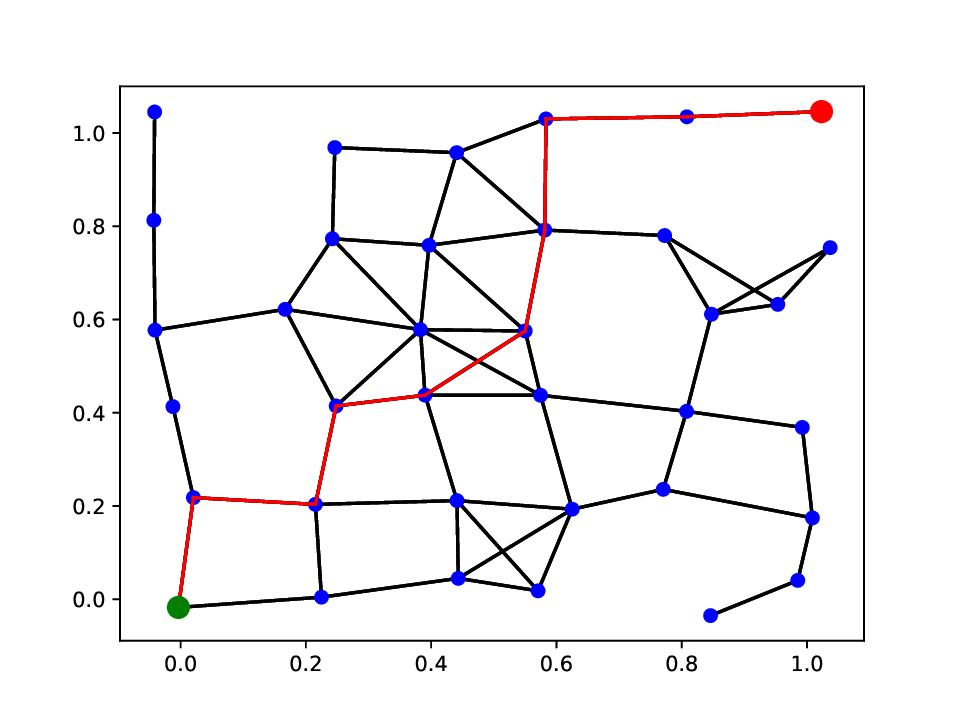}}
	\caption{Path planning using the Hebbian-type synaptic learning rule and GABF under the fully distributed
		singularity-free prescribed-time control strategy. The red trajectory represents the resulting path.}
	\label{fig:learning}
\end{figure}

Figure \ref{fig:learning} illustrates the simulation results. In this scenario, 36 nodes are uniformly distributed over a $[0,1] \times [0,1]$ area, with each node's coordinates randomly perturbed by an offset drawn uniformly from $[-0.05, 0.05]$. Nodes can communicate with their neighbors within a communication range of 0.25. The green node at the bottom-left corner of Figure \ref{fig:learning} represents the start point, while the red node at the top-right marks the destination. The initial value of $\bar{w}_{ij}$ is set to 0.5. After the initial training, GABF produces a feasible but suboptimal path during the path planning stage, as the trained weights $\bar{w}_{ij}$, which approximate $1 - \frac{w_{ij}}{K}$, has not converged yet. After two training iterations, GABF generates an improved but still non-optimal path. After the sixth training iteration, GABF successfully identifies and converges to the shortest path, as illustrated in Figure \ref{fig:learning}(c).

\section{Conclusions}\label{sec:con}
This paper presents two control strategies for GABF. The first control strategy enables GABF to converge to the stationary value within a prescribed time, while the adopted time-varying scaling function remains independent of the underlying graph parameters. The second control strategy further eliminates the requirement for overestimated initial states and ensures convergence strictly before the prescribed time, thereby avoiding the occurrence of infinite gain at the terminal time. 
While these results extend the asymptotic and finite time convergence outcomes for GABF \cite{VABDP-ACM,mo2021global} as well as those developed for DBMC \cite{mo2025,Zhang2017,zhang2017perturbing}, they also highlight a distinct aspect of the tradeoff between convergence rate and control complexity: the prescribed-time control strategy requires overestimated initial states, whereas the fully distributed singularity-free prescribed-time control strategy incurs higher computational complexity due to the use of fractional powers. \textcolor{red}{Moreover, the algorithm studied in this paper is a specific Bellman-operator-based algorithm. Existing studies on such Bellman operators have primarily concentrated on discrete-time value iteration in dynamic programming, where the main objective is to establish geometric convergence under repeated applications of the Bellman operator \cite{bertsekas2012dynamic}. By contrast, prescribed-time control has mainly been developed for continuous-time dynamical systems described by ordinary differential equations. Developing a general prescribed-time stabilization theory for Bellman-operator-induced mappings that bridges these two research directions remains an interesting and challenging topic, which we leave for future research.}


\bibliographystyle{automaticanew} 
\bibliography{refs}

\section*{Appendix}\label{App}
\noindent

\setlength{\parskip}{0pt}

{\bf Proof of Theorem \ref{the:exponential}:} Consider a sequence of nodes $i_0, \cdots, i_T$ such that $i_0 \in S$ and $i_{\ell} \in \cP(i_{\ell + 1})$ for $\ell \in \{0, \cdots, T-1\}$. All nodes are in such a sequence and $T \leq \cD(G) - 1$ by definition. We prove by induction that 
\begin{equation}\label{eq:induction}
	e_{i_\ell}(t)\! \leq \!e^{-\eta t}\sum_{j = 1}^{\ell}e_{i_j}(0)\frac{(L\eta t)^{\ell - j}}{(\ell - j)!}, \forall \ell \in \{1, \cdots, T\}.
\end{equation}

For $i_1$, it follows from (\ref{eq:protocol}) and (\ref{eq:error}) that 
\begin{flalign}
	&\ed_{i_1}(t) =	\xd_{i_1}(t) = -\eta(x_{i_1}(t) - \min_{j \in \cN(i_1)}\{f_{i_1}(x_j(t),w_{i_1j})\})  \nonumber\\
	&\leq  -\eta(x_{i_1}(t) - f_{i_1}(x^*_{i_0}, w_{i_1i_0})) =  -\eta e_{i_1}(t), \label{eq:tc}
\end{flalign}
where (\ref{eq:tc}) uses the fact that $i_0 \in \cP(i_1)\cap S$.
Then $e_{i_1}(t) \leq e_{i_1}(0)e^{-\eta t}$ and (\ref{eq:induction}) holds for $\ell = 1$. Suppose (\ref{eq:induction}) holds for some $\ell \in \{1,\cdots, T-1\}$. For node $i_{\ell + 1}$, it follows from (\ref{eq:protocol}) that
\begin{flalign}
	&	\ed_{i_{\ell + 1}}(t)  \leq -\eta(x_{i_{\ell + 1}}(t) - f_{i_{\ell + 1}}(x^*_{i_\ell},w_{i_{\ell + 1}i_\ell})  \nonumber \\
	& ~~~ + f_{i_{\ell + 1}}(x^*_{i_\ell},w_{i_{\ell + 1}i_\ell}) - f_{i_{\ell + 1}}(x_{i_\ell}(t),w_{i_{\ell + 1}i_\ell})   ) \label{eq:co}\\
	&\leq  -\eta(e_{i_{\ell+1}}(t) - Le_{i_{\ell}}(t) ) \label{eq:lip}\\
	&\leq -\eta\Big(e_{i_{\ell+1}}(t) - Le^{-\eta t}\sum_{j = 1}^{\ell}L^{\ell - j}e_{i_j}(0)\frac{(\eta t)^{\ell - j}}{(\ell - j)!}\Big ), \label{eq:uin}
\end{flalign}
where (\ref{eq:co}) follows from $i_\ell \in \cN(i_{\ell+1})$, (\ref{eq:lip}) uses (\ref{eq:bellman}) and $i_\ell \in \cP(i_{\ell + 1})$,  as well as (\ref{eq:Lipschitz}), $f_i$ is strictly monotonically increasing w.r.t. its first argument and $x_{i_\ell}(t) \geq x^*_{i_\ell}$ as established in Lemma \ref{le:over}, and (\ref{eq:uin}) uses our induction hypothesis. By the comparison principle, (\ref{eq:uin}) leads to $e_{i_{\ell + 1}}(t) \leq e^{-\eta t}\sum_{j = 1}^{\ell + 1}L^{\ell + 1 - j}e_{i_j}(0)\frac{(\eta t)^{\ell+1- j}}{(\ell+1 - j)!}$, and thus (\ref{eq:induction}) holds. As $T \leq \cD(G) - 1$, for all $i \in 
V\setminus S$, there holds
\begin{flalign}\label{eq:errorbound}
	 e_i(t) \leq  e_{\max}(0)e^{-\eta t} \sum_{j = 1}^{\cD(G) - 1} \frac{(L\eta t)^{\cD(G) - 1- j}}{(\cD(G) - 1 - j)!}.
\end{flalign}
It can be readily verified that 
(\ref{eq:errorbound}) can be written as 
\begin{flalign}
	&	e_i(t) \leq e^{-\eta t}\bar{L}e_{\max}(0)\sum_{j = 1}^{\cD(G) - 1}(\eta t)^{\cD(G) - 1 - j}\label{eq:gen} \\
	& \leq 	e_{\max}(0)\bar{L}(\cD(G)-1)\alpha e^{-0.5\eta t},\label{eq:dup}
\end{flalign}
where in (\ref{eq:gen}) $\bar{L} = \max\{1, L^{\cD(G) - 2}  \}$, and in (\ref{eq:dup}) $\alpha = \max_{j \in \{1, \cdots, \cD(G) - 1 \}} \left( \frac{\cD(G) - 1 -j}{0.5 e} \right)^{\cD(G) - 1 - j}$. As $e_i(t) \geq 0$ for all $t \geq 0$ and $i \in V\setminus S$ by Lemma \ref{le:over}, our claim follows.



{\bf Proof of Lemma \ref{le:leftlimit}:} Define $\tau: [0, T_p) \rightarrow [0, +\infty)$ as
\begin{equation}\label{eq:ts}
	\tau(t) = \gamma t - 2(1+h)\ln\frac{T_p - t}{T_p},
\end{equation} 
with $\gamma$ and $h$ defined in (\ref{eq:ebar}) and (\ref{eq:rho}), respectively. $\tau(t)$ is continuous and strictly increasing over $[0, T_p)$, and $\dot{\tau}(t) = \gamma + 2\frac{1+h}{T_p - t} = \bar{\eta}(t)$. Define
\begin{equation}\label{eq:nstate}
	y_i(\tau) = x_i(t(\tau)), \forall i \in V,
\end{equation}
with $x_i(t)$ defined in (\ref{eq:prespecified}). It follows that for $\tau \in [0, +\infty)$
\begin{flalign}
	&	\frac{dy_i(\tau)}{d\tau} \!=\! \frac{dx_i(t)}{dt}\frac{dt}{d\tau} \!= \! \min_{j \in \cN(i)}\{f_i(x_j(t),w_{ij})\} \!-\! x_i(t)  \label{eq:range} \\
	& = -(y_i(\tau) - \min_{j \in \cN(i)}\{f_i(y_j(\tau),w_{ij})\}), \label{eq:rep}
\end{flalign}
where (\ref{eq:range}) uses $\frac{dt}{d\tau} = 1/\bar{\eta}(t)$, and (\ref{eq:rep}) uses (\ref{eq:nstate}). Therefore, $y_i(\tau)$ satisfies the nominal GABF in (\ref{eq:protocol}) with $\eta = 1$. It then follows from Theorem~\ref{the:exponential} that $y_i(\tau)$ converges exponentially to $x_i^*$ for all $i \in V$. In particular, it follows from (\ref{eq:exTraj}) that $y_i(\tau)$ obeys
\begin{flalign}
	y_i(\tau) \leq x^*_i + e_{\max}(0)\bar{L}(\cD(G)-1)\alpha e^{-0.5 \tau}, \label{eq:tran}
\end{flalign}
where $\alpha$ and $\bar{L}$ are defined in (\ref{eq:dup}). Putting (\ref{eq:tran}) into (\ref{eq:nstate}), with Lemma \ref{le:alloverpt} and $M = e_{\max}(0)\bar{L}\alpha(\cD(G)-1)$, we obtain that for all $i \in V\setminus S$ and $t \in [0, T_p)$
\begin{equation*}
	x_i^* \leq	x_i(t) \leq x_i^* + Me^{-0.5 (\gamma t - (2+2h)\ln\frac{T_p - t}{T_p})},
\end{equation*}
{\bf Proof of Lemma \ref{le:continuity}:}
(\ref{eq:leftdot}) holds trivially for $i \in S$.  For $i \in V\setminus S$, it follows from (\ref{eq:prespecified}) that for $t \in [0, T_p)$
\begin{flalign}
	&\!\!\!\!	\xd_i(t) =  \ed_i(t) = \bar{\eta}(t)(\min_{j \in \cN(i)}\{f_i(x_j(t),w_{ij})\} - x_i(t)) \nonumber \\
	&\!\!\!\! \geq \bar{\eta}(t)(\min_{j \in \cN(i)}\{f_i(x^*_j,w_{ij})\} - x_i(t) )  = -\bar{\eta}(t)e_i(t), \label{eq:stu}
\end{flalign}
where the inequality in (\ref{eq:stu}) follows that  $x_j(t) \geq x^*_j$ for all $t \in [0, T_p)$ and $j \in V$, and from that $f_i$ is monotonically increasing in its first argument, and the equality in (\ref{eq:stu}) uses (\ref{eq:bellman}). Further, from (\ref{eq:boundt}) and (\ref{eq:ebar}), (\ref{eq:stu}) yields
\begin{eqnarray}
	\ed_i(t) 
	&\geq& -Me^{-0.5 \gamma t }(\gamma + \frac{2+2h}{T_p - t})(\frac{T_p - t}{T_p})^{1+h} \label{eq:lowerli} . 
\end{eqnarray}
Therefore, we have $\lim_{t \rightarrow T_p^-} \ed_i(t) \geq 0$ if $h > 0$.

Note that (\ref{eq:stu}) also implies that $e_i(t)\geq \frac{1}{\rho^2(t)}e^{-\lambda t}e_i(0)$ by \cite[Lemma 6]{mo2025} and comparison principle. Again, let $j \in \cP(i)$, it follows from (\ref{eq:prespecified}) that for $t \in [0, T_p)$
\begin{flalign}
	& \ed_{i}(t) \leq  -\bar{\eta}(t) \big(  x_{i}(t) -  f_i(x_j(t),w_{ij}) +  f_i(x^*_j,w_{ij}) \nonumber\\
	&  ~~~~~~~~~~~- f_i(x^*_j,w_{ij})  \big) \nonumber \\
	& \leq -\bar{\eta}(t)(e_{i}(t) - Le_{j}(t)) \label{eq:utt}  \\
	& \leq -\bar{\eta}(t)(\frac{e^{-\lambda t}e_i(0)}{\rho^2(t)} \!-\! LMe^{-0.5 (\gamma t - (2+2h) \ln\frac{T_p - t}{T_p})}  ),  \label{eq:ute}
\end{flalign}
where (\ref{eq:utt}) uses (\ref{eq:bellman}) and (\ref{eq:Lipschitz}), and (\ref{eq:ute}) uses (\ref{eq:boundt}) in Lemma \ref{le:leftlimit}.
Therefore, it follows from (\ref{eq:ebar}) and (\ref{eq:lowerli}) that $\lim_{t \rightarrow T_p^-} \ed_i(t) \leq 0$ if $h > 0$, and our result follows.

{\bf Proof of Lemma \ref{le:globalexistence}:} 
Let $x_i(t): [0, \tau_{\max}) \rightarrow \mathbb{R}$ be the right maximally defined solution to (\ref{eq:singularity}) for $i \in V\setminus S$. We prove by induction that $x_i(t)$ is upper bounded over $[0, \tau_{\max})$. Consider any $i \in \cF_1$,  from (\ref{eq:singularity}), 
\begin{flalign}
	& \ed_i(t) =	\xd_i(t) = -\eta ( \beta_1\xh_i(t)^{\frac{p_1}{p_2}} + \beta_2\xh_i(t)^{\frac{m_1}{m_2}} ) \nonumber \\
	& \leq -\eta ( \beta_1e_i(t)^{\frac{p_1}{p_2}} + \beta_2e_i(t)^{\frac{m_1}{m_2}} ), t\in  [0, \tau_{\max}), \label{eq:f1}
\end{flalign}
where (\ref{eq:f1}) follows that $\exists l \in \cP(i)\cap S$ by Definition \ref{def:longest} and $	\xh_i(t) = x_i(t) - \min_{j \in \cN(i)}\{f_i(x_j(t),w_{ij})\}  \geq  x_i(t) - f_i(x_l(t),w_{il}) =  x_i(t) - f_i(x_l^*,w_{il}) = e_i(t)$. According to Kamke's comparison principle \cite[Theorem \rom{8}]{walter2013ordinary}, $e_i(t)$ is upper bounded by the maximal solution of the following equation:
\begin{equation*}
	\dot{\bar{e}}_i(t) = -\eta ( \beta_1\bar{e}_i(t)^{\frac{p_1}{p_2}} + \beta_2\bar{e}_i(t)^{\frac{m_1}{m_2}} ) , \bar{e}_i(0) = e_i(0).
\end{equation*}
From \cite[Lemma 4.1]{zuo2016distributed}, $e_i(t)$ is upper bounded for all $t \in [0, \tau_{\max})$. Suppose $e_i(t)$ is upper bounded by $B > 0$ for all $i \in \cF_\ell$ with $\ell \in \{1, \cdots, \cD(G) - 2 \}$ over $[0, \tau_{\max})$. For $i \in \cF_{\ell + 1}$ and $j \in \cP(i)$, since $j \in \cF_\ell$ by Definition~\ref{def:longest}, $e_j(t) \leq B$ on $[0, \tau_{\max})$. Hence, by (\ref{eq:singularity}), for all $t \in [0, \tau_{\max})$,
\begin{flalign}
	&\!\!\! \ed_i(t) = \xd_i(t) = -\eta \big( \beta_1\xh_i(t)^{\frac{p_1}{p_2}} + \beta_2\xh_i(t)^{\frac{m_1}{m_2}} )  \big) \nonumber \\
	&\!\!\! \leq  -\eta \big( \beta_1(e_i(t) - LB)^{\frac{p_1}{p_2}} + \beta_2(e_i(t) - LB)^{\frac{m_1}{m_2}} )  \big),  \label{eq:lm1}
\end{flalign}
where (\ref{eq:lm1}) follows that $	\xh_i(t) \geq  x_i(t) - f_i(x_j(t),w_{ij}) \geq x_i(t) - f_i(x^*_j,w_{ij}) - LB = e_i(t) - LB$ with $L$ defined in (\ref{eq:Lipschitz}). By the same argument, $e_i(t)$ is upper bounded over $ [0, \tau_{\max})$, and our claim follows.

{\bf Proof of Lemma \ref{le:globalexistence1}:} Let $[0, \tau_{\max})$ be the right maximal interval of existence for the solutions to (\ref{eq:singularity}). Clarke's generalized derivative \cite{clarke1990optimization} yields
\begin{flalign}
	&	\xd^{(0)}_{\min}(t) = -\eta \sum_{i \in \cK^0(t)}\sigma_i( \beta_1\xh_i(t)^{\frac{p_1}{p_2}} + \beta_2\xh_i(t)^{\frac{m_1}{m_2}} ),  \label{eq:em} 
\end{flalign}
where $\cK^0(t)$ is defined in (\ref{eq:kt}), $\sum_{i \in \cK^0(t)}\sigma_i = 1$ with $\sigma_i \in [0,1]$, and $\xh_i(t) = x_i(t) - f_i(x_l(t),w_{il})$ with $l = \arg \min_{j \in \cN(i)}\{f_i(x_j(t),w_{ij})\}$. Consider two cases: 1) $l \in S$; and 2) $l \notin S$. In case 1), from (\ref{eq:bellman}), 
$\xh_i(t) = x_i(t) - f_i(x_l^*,w_{il}) \leq x_i(t) - x_i^* = e_i(t)$. In case 2), by (\ref{eq:small}), $x_l(t) \geq x_i(t)$. From (\ref{eq:progressive}), $f_i(x_l(t),w_{il}) \geq x_l(t) + \zeta \geq x_i(t) + \zeta$ and $\xh_i(t) \leq -\zeta$. Let $\Delta = \eta(\beta_1\zeta^{\frac{p_1}{p_2}} + \beta_2\zeta^{\frac{m_1}{m_2}}) > 0$ and $\tilde{e} = x^{(0)}_{\min}(t) - \xb_1^\ast$, combining two cases, we have 
\begin{flalign}
	&\xd^{(0)}_{\min}(t) \geq \sum_{i \in \cK^0(t)} \sigma_i \min \{\Delta,  -\eta(\beta_1e_i(t)^{\frac{p_1}{p_2}} + \beta_2e_i(t)^{\frac{m_1}{m_2}}) \} \nonumber \\
	& \geq \sum_{i \in \cK^0(t)} \sigma_i \min \{\Delta,  -\eta(\beta_1\tilde{e}^{\frac{p_1}{p_2}} + \beta_2\tilde{e}^{\frac{m_1}{m_2}}) \} \label{eq:low1} \\
	& = \min \{\Delta, -\eta(\beta_1\tilde{e} ^{\frac{p_1}{p_2}} + \beta_2\tilde{e}^{\frac{m_1}{m_2}}) \}, \label{eq:glt}
\end{flalign}
where (\ref{eq:low1}) follows from $x_i(t) = x^{(0)}_{\min}(t)$ as $i \in \cK^0(t)$, and $x_i^* \geq \xb_1^*$ by Definition \ref{def:S}. 

For (\ref{eq:glt}), consider two cases: 1) $x^{(0)}_{\min}(0) \geq \xb_1^*$; and 2) $x^{(0)}_{\min}(0) < \xb_1^*$. For case 1), as $\Delta > 0$, from Kamke's comparison principle and  \cite[Lemma 4.1]{zuo2016distributed}, $x^{(0)}_{\min}(t)$ is lower bounded by $\xb_1^*$ over $[0, \tau_{\max})$. For the latter case, $x^{(0)}_{\min}(t)$ is strictly increasing before reaching $\xb_1^*$ and remains no less than $\xb_1^*$ thereafter. Therefore, $x^{(0)}_{\min}(t) \geq \min\{\xb_1^*, x^{(0)}_{\min}(0) \}$ for $t \in [0, \tau_{\max})$.

{\bf Proof of Lemma \ref{le:overfinite}:} We first prove that $x_{\min}^{(l)}(t)$ will exceed $\xb_{l+1}^*$ after a finite time $T_f$. For $t \geq t_1$, it follows from (\ref{eq:singularity}) and Clarke's generalized derivative that 
\begin{flalign}
	\xd_{\min}^{(l)}(t) =  -\eta  \sum_{i \in \cK^{l}(t)}\sigma_i( \beta_1\xh_i(t)^{\frac{p_1}{p_2}} + \beta_2\xh_i(t)^{\frac{m_1}{m_2}}  ),\label{eq:clarke1}
\end{flalign}
where $\xh_i(t) = x_i(t) - f_i(x_p(t),w_{ip})$ with $p$ obeying $p = \arg \min_{j \in \cN(i)}\{f_i(x_j(t),w_{ij})\}$. Consider two cases: 1) $p \in \cup_{j \in \{0, 1, \cdots, l \}}\cS_j$; and 2) $p \notin \cup_{j \in \{0, 1, \cdots, l \}}\cS_j$. In case 1),
\begin{flalign}
	&	\xh_i(t) = x_i(t) - f_i(x_p^*,w_{ip})  \leq x_i(t) - x_i^* = e_i(t), \label{eq:gt}
\end{flalign}
where the inequality in (\ref{eq:gt}) uses (\ref{eq:bellman}). In case 2), it follows from (\ref{eq:kt}) that $x_p(t) \geq x_i(t) = x_{\min}^{(l)}(t)$. From (\ref{eq:progressive}), 
\begin{flalign}
	&	\xh_i(t) \leq x_i(t) - f_i(x_p(t),w_{ip})  \leq -\zeta. \label{eq:gz}
\end{flalign}
Let $\Delta = \eta(\beta_1\zeta^{\frac{p_1}{p_2}} + \beta_2\zeta^{\frac{m_1}{m_2}} )  > 0$ and $\breve{e} = x_{\min}^{(l)}(t) -  \xb_{l+1}^*$. Putting (\ref{eq:gt})-(\ref{eq:gz}) into (\ref{eq:clarke1}), from (\ref{eq:glt}) and (\ref{eq:low1}),
\begin{flalign}
	\dot{x}_{\min}^{(l)}(t) \geq   \min\{\Delta, -\eta(\beta_1\breve{e}^{\frac{p_1}{p_2}} + \beta_2\breve{e}^{\frac{m_1}{m_2}} ) \}, \forall t \geq t_1. \nonumber
\end{flalign}
If $x^{(l)}_{\min}(t_1) \geq \xb_{l+1}^* - \zeta$, by \cite[Lemma 4.1]{zuo2016distributed},  $x_{\min}^{(l)}(t) \geq \xb_{l+1}^*$ for $t \geq t_1 + T_*$. If $x^{(l)}_{\min}(t_1) < \xb_{l+1}^* - \zeta$, $x^{(l)}_{\min}(t)$ first increases at least at a linear rate $\Delta$ until reaching $\xb_{l+1}^* - \zeta$, and then exceeds $\xb_{l+1}^*$ after an additional $T_*$, i.e., $x_{\min}^{(l)}(t) \geq \xb_{l+1}^*$ for $t \geq t_1 + T_* + T_f$ with
\begin{flalign*}
	T_f \leq \frac{\xb_{l+1}^* - \zeta - x^{(l)}_{\min}(t_1)}{\Delta} \leq \frac{\xb_R^* - \zeta - \min\{\xb_1^*, x^{(0)}_{\min}(0)\}}{\Delta}
\end{flalign*}
where the last inequality uses Lemma \ref{le:globalbound} and Definition \ref{def:S}. Therefore, $x_{\min}^{(l)}(t) \geq \xb_{l+1}^*$ for $t \geq t_1 + T_* + \max\{0, T_f\}$.

We next prove that $x_i(t)$, for $i \in \cS_{l + 1}$, will converge to $x_i^*$ with another $T_*$. Let $p = \arg \min_{j \in \cN(i)}\{f_i(x_j(t),w_{ij})\}$. Then for $t \geq t_1 + T_* + \max\{0, T_f\}$,
\begin{flalign}
	f_i(x_p(t),w_{ip})  \geq  x_i^*,\label{eq:twop}
\end{flalign}
where (\ref{eq:twop}) follows as either $p \in \cup_{j = 0}^{l} \cS_j$, in which case $f_i(x_p(t),w_{ip}) = f_i(x_p^*,w_{ip}) \geq x_i^*$ by (\ref{eq:bellman}), or $p \in V\setminus \cup_{j = 0}^{l} \cS_j$, in which case $f_i(x_p(t),w_{ip}) \geq f_i(\xb_{l+1}^*,w_{ip}) > \xb_{l+1}^* + \zeta > x_i^*$ by the properties of $f_i$. Note that any $j \in \cP(i)$ is in $\cS_q$ with $q < l + 1$, and by assumption $x_j(t) = x_j^*$ for $t \geq t_1$, implying that $f_i(x_j(t),w_{ij}) = f_i(x_j^*,w_{ij}) = x_i^*$ for $t \geq t_1$. Thus, by (\ref{eq:twop}), $\xh_i(t) =  x_i(t) - x_i^* = e_i(t)$ for $t \geq t_1 + T_* + \max\{0, T_f\}$, and by (\ref{eq:singularity}) $\xd_i(t) = \ed_i(t) = -\eta ( \beta_1e_i(t)^{\frac{p_1}{p_2}} + \beta_2e_i(t)^{\frac{m_1}{m_2}} )$.   
Thus, $x_i(t) = x_i^*$ for $t \geq t_1 + 2T_* + \max\{0, T_f\}$ by \cite[Lemma 4.1]{zuo2016distributed}.

{\bf Proof of Theorem \ref{the:fix}:} 
As for all $i \in \cS_0 = S$, $x_i(t) = x^*_i$ for all $t \geq 0$, by (\ref{le:overfinite}) for all $i \in \cS_1$, $x_i(t) = x_i^*$ for $t \geq 2T_* + \max\{0, T_f\}$. Repeating the above procedure, $x_i(t) = x_i^*$ for all $i \in \cup_{j = 0}^l \cS_j$ and $t \geq 2lT_* + l\max\{0, T_f\}$ with $l \in \{1, \cdots, R \}$. As $R \leq |V\setminus S|$, our claim follows.

{\bf Proof of Theorem \ref{the:singularity}:} Define $\tau$ as in (\ref{eq:ts}) and let $y_i(\tau) = x_i(t(\tau))$ with $x_i(t)$ defined in (\ref{eq:fully}). Consider $\xh_i(t) \neq 0$ with $i \notin S$. From (\ref{eq:fully}) and (\ref{eq:range}), for all $\tau \in [0, +\infty)$,
\begin{flalign}
	&	\frac{dy_i(\tau)}{d\tau} = \frac{dx_i(t)}{dt}\frac{dt}{d\tau}
 = -(\beta_1(\yh_i(\tau))^{\frac{p_1}{p_2}} + \beta_2(\yh_i(\tau))^{\frac{m_1}{m_2}} ), \nonumber
\end{flalign}
where  $\yh_i(\tau) = y_i(\tau) - \min_{j \in \cN(i)}\{f_i(y_j(\tau),w_{ij})\}.$ Therefore, it follows from Theorem \ref{the:fix} that for all $i \in V$
\begin{equation*}
	y_i(\tau) = x_i^*, \forall \tau \geq 2|V\setminus S|T'_* + |V\setminus S|\max\{0, T_f\},
\end{equation*}
with $T'_* = \frac{1}{\beta_1}\frac{m_2}{m_1 - m_2} + \frac{1}{\beta_2}\frac{p_1}{p_2 - p_1}$ as established in Lemma \ref{le:overfinite}, $T_f$ defined in (\ref{eq:Tf}), 
and $x_i^*$ defined in (\ref{eq:bellman}).

Let $T_p^*$ satisfy $\tau(T_p^*) = 2|V\setminus S|T'_* + |V\setminus S|\max\{0, T_f\}$. Obviously, $T_p^* < T_p$ and $x_i(T_p^*) = x_i^*$ for all $i \in V$. Moreover, by (\ref{eq:bellman}) and (\ref{eq:xhat}),  $\xh_i(T_p^*) = 0$ for $i \notin S$. Thus, the third case in (\ref{eq:fully}), namely the fixed time control strategy, is activated, with $x_i(T_p^*) = x_i^*$ for all $i \in V$, implying that $x_i(t) = x_i^*$ for all $i \in V$ and all $t \geq T_p^*$.

\end{document}